\DeclareMathAlphabet{\mathbfsl}{OT1}{ppl}{b}{it} 
\newcommand{\deff}{\mbox{$\stackrel{\rm def}{=}$}}
\DeclareRobustCommand{\nsbinom}{\genfrac[]\z@{}}
\newcommand{\sbinomq}[2]{\nsbinom{{#1}}{{#2}}_{q}}
\newcommand{\sbinomtwo}[2]{\nsbinom{{#1}}{{#2}}_{2}}
\newcommand{\field}[1]{\mathbb{#1}}
\newcommand{\F}{\field{F}}
\newcommand{\dS}{\field{S}}
\newcommand{\cA}{{\cal A}}
\newcommand{\cB}{{\cal B}}
\newcommand{\cC}{{\cal C}}
\newcommand{\cG}{{\cal G}}
\newcommand{\C}{{\mathbb C}}
\newcommand{\linadd}{\kern1pt\mbox{\small$\boxplus$}\kern1pt}
\newtheorem{theorem}{Theorem}
\newtheorem{lemma}{Lemma}
\newtheorem{remark}{Remark}
\newtheorem{corollary}{Corollary}
\newtheorem{example}{Example}
\theoremstyle{definition}
\begin{document}

\bibliographystyle{plain}

\title{Grassmannian Codes with New\\Distance Measures for Network Coding}

\author{
{\sc Tuvi Etzion, \textit{Fellow, IEEE}}\thanks{Department of Computer Science, Technion,
Haifa 3200003, Israel, e-mail: {\tt etzion@cs.technion.ac.il}.
} \hspace{1cm}
{\sc Hui Zhang}\thanks{Nanyang Technological University, Singapore, e-mail: {\tt huizhang@ntu.edu.sg}.
Part of the research was performed while
the author was with the Department of Computer Science, Technion,
Haifa 3200003, supported in part by a fellowship of the Israel Council of Higher
Education.~~~~~~~~~~~~~~~~~~~~~~~~~~~~~~~~~~~~~~~~~~~~~~~~~~~~   \newline
Parts of this work have been presented at the \emph{IEEE International Symposium on Information Theory 2018}, Vail, Colorado, U.S.A., June 2018.}
}


\maketitle

\begin{abstract}
Grassmannian codes are known to be useful in error-correction for random network coding.
Recently, they were used to prove that vector network codes outperform scalar linear
network codes, on multicast networks, with respect to the alphabet size.
The multicast networks which were used for this purpose are generalized combination networks.
In both the scalar and the vector network coding solutions,
the subspace distance is used as the distance measure for the codes
which solve the network coding problem in the generalized combination networks.
In this work we show that the subspace distance can be replaced with two other
possible distance measures which generalize the subspace distance.
These two distance measures are shown to be equivalent under an orthogonal transformation.
It is proved that the Grassmannian codes with the new distance measures
generalize the Grassmannian codes with the
subspace distance and the subspace designs with the strength of the design.
Furthermore, optimal Grassmannian codes with the new distance measures
have minimal requirements for network coding solutions of some generalized combination networks.
The coding problems related to these two distance measures, especially
with respect to network coding, are discussed. Finally, by using these new concepts it is proved that codes in the
Hamming scheme form a subfamily of the Grassmannian codes.

\end{abstract}

\vspace{0.5cm}

\begin{IEEEkeywords}
Distance measures, generalized combination networks, Grassmannian codes, network coding.
\end{IEEEkeywords}

\vspace{0.5cm}



\section{Introduction}

\looseness=-1
\IEEEPARstart{N}{etwork} coding has been attracting increasing attention in the last fifteen years.
The seminal work of Ahlswede, Cai, Li, and Yeung~\cite{ACLY00}
and Li, Yeung, and Cai~\cite{LYC03} introduced the basic concepts of network coding and
how network coding outperforms the well-known routing.
The class of networks which are mainly studied is the class of multicast networks and
these are also the target of this work. A \emph{multicast network} is a directed acyclic graph
with one source. The source has $h$ messages, which are scalars over a finite field $\F_q$.
The network has $N$ receivers, each one demands all the $h$ messages of the source
to be transmitted in one round of a network use.
An up-to-date survey on network coding for multicast networks can be found for example in~\cite{FrSo16}.
K\"{o}tter and M\'{e}dard~\cite{KoMe03} provided
an algebraic formulation for the network coding problem: for a given network, find coding
coefficients for each edge, whose starting vertex has in-degree greater
than one. These coding coefficients are multiplied by the symbols received at
the starting node of the edge and these products are added together. These coefficients should be chosen in a way
that each receiver can recover the $h$ messages from its received symbols
on its incoming edges.
This sequence of coding coefficients at each such edge is called the \emph{local coding vector}
and the edge is called a \emph{coding point}.
Such an assignment of coding coefficients for all such edges in the network is called a~\emph{solution} for the network
and the network is called \emph{solvable}. It is easy to verify that the information
on each edge is a linear combination of the $h$ messages.
The vector of length $h$ of these coefficients of this linear combination
is called the \emph{global coding vector}. From the global
coding vectors and the symbols on its incoming edges, the receiver should recover the $h$ messages,
by solving a set of $h$ linearly independent equations.
The coding coefficients defined in this way are scalars and the solution is a \emph{scalar linear solution}.
Ebrahimi and Fragouli~\cite{EbFr11}
have extended this algebraic approach to vector network coding. In the setting of vector
network coding, the messages of the source
are vectors of length~$\ell$ over $\F_q$ and the coding coefficients are $\ell \times \ell$ matrices
over~$\F_q$. A set of matrices, which have the role of the coefficients
of these vector messages, such that all the receivers
can recover their requested information, is called a \emph{vector solution}.
Also in the setting of vector network coding we distinguish between the local coding
vectors and the global coding vectors. There is a third type of network coding solution,
a scalar nonlinear network code. Again, in each coding point there is a function of the
symbols received at the starting node of the coding point. This function can be linear or
nonlinear. There is clearly a hierarchy, where a scalar linear solution can
be translated to a vector solution, and a vector solution can be translated to a scalar nonlinear solution.

The \emph{alphabet size} of the solution is an important parameter that directly influences
the complexity of the calculations at the network nodes and as a consequence the performance of the network.
A comparison between the required alphabet size for a scalar linear solution, a vector solution, and
a scalar nonlinear solution, of the same multicast network is an important problem.
It was proved in~\cite{EtWa15,EtWa16} that there are
multicast networks on which a vector network coding solution with vectors of length $\ell$
over $\F_q$ \emph{outperforms} any scalar linear network coding solution, i.e. the scalar
solution requires an alphabet of size $q_s$, where $q_s > q^\ell$.
The proof used a family of networks called the \emph{generalized combination networks},
where the combination networks were defined and used in~\cite{RiAh06}.

K\"{o}tter and Kschischang~\cite{KoKs08} introduced a framework for error-correction in
random network coding. They have shown that for this purpose the codewords (messages) are taken as
subspaces over a finite field $\F_q$. For this purpose they have
defined the \emph{subspace distance}. This approach was mainly applied on subspaces of the same dimension.
For given positive integers $n$ and~$k$, ${0 \leq k \leq n}$, the Grassmannian
$\cG_q(n,k)$ is the set of all subspaces of~$\F_q^n$ whose dimension is $k$.
It is well known that
$$ \begin{small}
| \cG_q (n,k) | = \sbinomq{n}{k}
\deff \frac{(q^n-1)(q^{n-1}-1) \cdots
(q^{n-k+1}-1)}{(q^k-1)(q^{k-1}-1) \cdots (q-1)}
\end{small}
$$
where $\sbinomq{n}{k}$ is the $q$-\emph{binomial coefficient} (known also
as the $q-$\emph{ary Gaussian coefficient}~\cite[pp. 325-332]{vLWi92}.
A code $\C \in \cG_q(n,k)$ is called a \emph{Grassmannian code} or a \emph{constant dimension code}.
For two subspaces $X, Y \in \cG_q(n,k)$ the subspace distance is reduced to the \emph{Grassmannian distance}
defined by

$$
d_G (X,Y) \deff k - \dim (X \cap Y) ~.
$$

Most of the research on Grassmannian codes motivated by~\cite{KoKs08} was in two directions -- finding the largest
codes with prescribed minimum Grassmannian distance and looking for designs based on subspaces.
To this end, the quantity $\cA_q (n,2d,k)$ was defined as the maximum size of a code
in~$\cG_q(n,k)$ with minimum Grassmannian distance $d$. There has been extensive work on Grassmannian codes
in the last ten years, e.g.~\cite{EtSi09,EtSi13,EtSt16,EtVa11,SKK08} and references therein.
A related concept is a \emph{subspace design}
or a \emph{block design} $t$-$(n,k,\lambda)_q$ which is a collection $\dS$ of
$k$-subspaces from $\cG_q(n,k)$ (called \emph{blocks}) such that each subspace of $\cG_q(n,t)$ is contained in
exactly $\lambda$ blocks of~$\dS$, where $t$ is called the \emph{strength} of the design.
In particular if $\lambda =1$ this subspace design is called a $q$-\emph{Steiner system}
and is denoted by $S_q(t,k,n)$. Note, that such a $q$-Steiner system is a Grassmannian code in $\cG_q(n,k)$
with minimum Grassmannian distance $k-t+1$. Such subspace designs were considered for example
in~\cite{BEOVW,BKL05,BKN15,Etz18,EtHo18,FLV14,KiLa15,KiPa15,MMY95,RaSi89,Suz90,Suz90a,Suz92,Tho87,Tho96}.

The goal of this work is to show that there is a tight connection between optimal Grassmannian codes
and network coding solutions for the generalized combination networks.
We will define two new dual distance measures on Grassmannian codes
which generalize the Grassmannian distance. We discuss the maximum sizes of Grassmannian
codes with the new distance measures and analyse these bounds
from a few different point of view. We explore the connection between these codes and related
generalized combination networks. Our exposition will derive some interesting properties of these codes
with respect to the traditional Grassmannian codes and some subspace designs.
We will show, using a few different approaches, that codes in the Hamming space
form a subfamily of the Grassmannian codes. Some other interesting connections to subspace designs
and codes in the Hamming scheme will be also explored.

The Grassmannian codes (constant dimension codes) are the $q$-analog of the constant weight codes,
where $q\text{-}$analogs replace concepts of subsets by concepts of subspaces when
problems on sets are transferred to problems on subspaces
over the finite field $\F_q$. For example, the size of a set is replaced by the dimension of a subspace,
the binomial coefficients are replaced by the Gaussian coefficients, etc.
The Grassmann space is the $q$-analog of the Johnson space and the subspace distance is the $q$-analog of
the Hamming distance. The new distance measures are $q$-analogs of related distances in the Johnson space.
The \emph{Johnson scheme} $J(n,w)$ consists of all $w\text{-}$subsets of an $n\text{-}$set (equivalent to binary words on length $n$ and weight~$w$).
The \emph{Johnson distance} $d_J(x,y)$ between two $w\text{-}$subsets $x$ and $y$ is
half of the related Hamming distance, i.e., $d_J(x,y) \triangleq |x \setminus y|$.

The rest of this paper is organized as follows. In Section~\ref{sec:gen_comb} we
present the combination network and its generalization which was defined in~\cite{EtWa15,EtWa16}.
We discuss the
family of codes which provide network coding solutions for these networks.
We will make a brief comparison between the related scalar coding solutions and vector coding solutions.
In Section~\ref{sec:gen_dist} we further consider this family of codes, define two dual distance measures
on these codes, and show how
these codes and the new distance measures defined on them generalize the conventional
Grassmannian codes with the Grassmannian distance. We show the connection of these codes to subspace designs.
We prove that for each such code
of the largest size, over $\F_q$, there exists a generalized combination network which is solved only by this code
(or another code with at least the same number of codewords and possibly more relaxed parameters).
Finally, we discuss which subfamily of these codes is useful for vector network coding.
In Section~\ref{sec:bounds} basic results on the upper bounds on sizes of these codes are presented.
At this point we note that some of the codes can have repeated codewords. Section~\ref{sec:bounds}
concentrates first on the case where there are no repeated codewords in the code.
It continues with upper bounds on sizes of codes where repeated codewords are considered.
In Section~\ref{sec:anal} we analyse the strength of our bounds and the implementation of the
codes on specific generalized combination networks.
In Section~\ref{sec:approach_Hamm} we discuss a few approaches to show how codes in the Hamming space
form a subfamily of codes in the Grassmann space.
We also discuss other connections of the newly defined distance measures and codes in the Hamming scheme.
Section~\ref{sec:conclude} provides a conclusion and some directions
for future research.

\section{Generalized Combination Networks}
\label{sec:gen_comb}

In this section we will define the generalized combination network which is a generalization
of the combination network~\cite{RiAh06}. This network defined
in~\cite{EtWa15,EtWa16} was used to prove that vector network coding outperforms scalar linear
network coding, in multicast networks, with respect to the alphabet size, using Grassmannian codes.

The \emph{generalized combination network} is called the
$\epsilon\text{-}$\emph{direct links} $k$-\emph{parallel links} $\mathcal{N}_{h,r,s}$ \emph{network},
in short the $(\epsilon,k)$-$\mathcal{N}_{h,r,s}$ network.
The network has three layers.
In the first layer there is a source
with $h$ messages. In the second layer there are $r$ nodes.
The source has $k$ parallel links to each node in the middle (second) layer.
From any $\alpha = \frac{s-\epsilon}{k}$ nodes
in the middle layer, there are links to one receiver in the third layer,
i.e. there are $\binom{r}{\alpha}$ receivers in the third layer.
From each one of these $\alpha$ nodes there are $k$ parallel links to the
related receiver in the third layer.
Additionally, from the source there are
$\epsilon$ direct parallel links to each one of the $\binom{r}{\alpha}$ receivers in the third layer.
Therefore, each receiver has $s = \alpha k+\epsilon$ incoming links.
The $(0,1)$-$\mathcal{N}_{h,r,s}$ network is the combination network defined in~\cite{RiAh06}.
This network has neither parallel links (between nodes) nor direct links (from the source
to the receiver).
We will assume some relations between the parameters $h$, $\alpha$, $\epsilon$, and $k$
such that the resulting network does not have a trivial solution or no solution.

\begin{theorem}[Theorem 8~\cite{EtWa15}]
The $(\epsilon,k)$-$\mathcal{N}_{h,r,\alpha k + \epsilon}$ network
has a trivial solution if $k + \epsilon \geq h$, and it has
no solution if $\alpha k + \epsilon < h$.
Otherwise, the network has a nontrivial solution.
\end{theorem}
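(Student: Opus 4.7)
The plan is to split along the three cases, which partition the parameter regime. The first two cases are disposed of by elementary routing and cut-set arguments respectively; the bulk of the work is in exhibiting a nontrivial code for the intermediate regime.

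For the trivial case $k+\epsilon\geq h$, I would observe that no coding is actually needed. The source places messages $m_1,\dots,m_{\min(k,h)}$ on the $k$ parallel links leading to each middle-layer node, and (if $k<h$) places the remaining $h-k$ messages on $h-k\leq\epsilon$ of the direct links to every receiver. Each middle node simply forwards what it receives. A receiver that is connected to at least one middle node (recall $\alpha\geq 1$) together with its $\epsilon$ direct links then reads off all $h$ messages in the clear. This shows solvability with only routing, which is what ``trivial'' means here.

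For the impossibility case $\alpha k+\epsilon<h$, I would invoke the standard cut-set bound. The $s=\alpha k+\epsilon$ incoming edges of any fixed receiver form an edge cut separating it from the source, so the amount of information that can reach it is bounded by $s$ (measured in symbols over the chosen alphabet, whether the code is scalar linear, vector, or scalar nonlinear). Since the receiver must decode $h>s$ independent messages, no solution exists.

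The main obstacle, and the only place where real coding is used, is the remaining regime $k+\epsilon<h\leq\alpha k+\epsilon$. The plan is to give an MDS-style construction: choose $r$ matrices $M_1,\dots,M_r\in\F_q^{k\times h}$ so that for every choice of indices $j_1<\cdots<j_\alpha$, the $(\alpha k+\epsilon)\times h$ matrix obtained by stacking $M_{j_1},\dots,M_{j_\alpha}$ and then appending the $\epsilon$ rows selecting the last $\epsilon$ coordinates of the source vector has rank $h$. Middle node $j$ transmits $M_j\xxx$ on its $k$ outgoing edges, where $\xxx\in\F_q^h$ is the source vector, and the source transmits its last $\epsilon$ coordinates on the direct links. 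Each receiver then recovers $\xxx$ by solving a full-rank linear system. Existence of such $M_j$ is guaranteed for any $q$ at least as large as a suitable Reed--Solomon or Gabidulin threshold, since $\alpha k+\epsilon\geq h$ makes the rank condition generic; the hard part is only to verify that the final $\epsilon$ selection rows are compatible with every $\alpha$-subset simultaneously, which can be arranged by taking $M_j$'s whose last $\epsilon$ columns reduce to a fixed identity block on the direct-link coordinates and whose first $h-\epsilon$ columns form a generic MDS code of dimension $h-\epsilon$ evaluated at $r$ distinct points in batches of size $k$.
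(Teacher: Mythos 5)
The paper does not prove this statement: it is imported verbatim as Theorem~8 of the cited work of Etzion and Wachter-Zeh, so there is no in-paper proof to compare you against. Judged on its own terms, your three-way split is the right one, and two of the three parts are unproblematic: the routing argument for $k+\epsilon\geq h$ and the cut-set argument for $\alpha k+\epsilon<h$ are exactly what is needed. Your construction for the middle regime $k+\epsilon<h\leq\alpha k+\epsilon$ is also the standard mechanism and is consistent with the framework this paper builds later: by Theorem~\ref{thm:gen_RiAh}, solvability is equivalent to the existence of an $\alpha$-$(h,k,h-k-\epsilon)^c_q$ code with $r$ codewords, and your batched MDS matrices produce exactly such a code. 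Concretely, take an $(h-\epsilon)\times rk$ generator matrix of an MDS code over $\F_q$ with $q\geq rk-1$, cut its columns into $r$ batches of $k$, and let $M_j$ be the transpose of the $j$-th batch padded by $\epsilon$ zero columns; any $\alpha$ batches supply $\alpha k\geq h-\epsilon$ columns, any $h-\epsilon$ of which are independent, so the corresponding row spaces together with the $\epsilon$ unit vectors on the direct-link coordinates span $\F_q^h$.

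Two points need repair. First, your closing sentence about the $M_j$'s ``last $\epsilon$ columns reducing to a fixed identity block'' is garbled: the identity block lives on the $\epsilon$ rows appended by the source for the direct links, not inside the $M_j$'s, whose last $\epsilon$ columns may simply be zero. The ``compatibility'' worry you raise dissolves once you note that the rank-$h$ condition reduces to the projections of the row spaces onto the first $h-\epsilon$ coordinates spanning $\F_q^{h-\epsilon}$. Second, and more substantively, you prove only that a solution exists in the middle regime, not that the regime is genuinely nontrivial. As stated, the trichotomy asserts that routing does not suffice there, yet for small $r$ it can (take $r=\alpha$, $k=1$, $\epsilon=0$, $h=\alpha$: route message $i$ through middle node $i$). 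A complete argument must either invoke the source paper's precise definition of ``trivial'' or add the observation that for $r$ large enough a routing assignment forces, by pigeonhole on the finitely many possible $k$-element message sets, some $\alpha$ middle nodes to carry identical sets, starving the receiver attached to them of all but $k+\epsilon<h$ messages. You should state explicitly which reading you are using and, under the second one, supply that pigeonhole step.
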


Which network codes over~$\F_q$ solve the networks from this family of networks? The answer
to this natural question is quite simple. Since each receiver has
$\epsilon$ direct links from the source, it follows that the source can send
any required $\epsilon$-subspace of $\F_q^h$ to the receiver. Hence, the receiver must be able to obtain
an $(h-\epsilon)$-subspace of~$\F_q^h$ from the $\alpha$~middle layer nodes connected
to it. Each one of these $\alpha$ nodes in the middle layer can send to the receiver
a $k$-subspace of $\F_q^h$ (we can assume w.l.o.g. that each node of the middle
layer holds a $k$-subspace of $\F_q^h$.). Hence, a scalar linear
solution for the network exists if and only if the linear span of the $k$-subspaces
of each subset of $\alpha$ nodes in the middle layer is at least of dimension $h-\epsilon$.
Hence, a scalar linear solution for the network exists if and only if there exists a Grassmannian code $\C$ with $r$
$k$-subspaces of $\F_q^h$, such that each subset of $\alpha$ codewords ($k$-subspaces)
spans a subspace whose dimension is at least $h - \epsilon$. For this solution the coding coefficients
on a set of $s=\alpha k +\epsilon$ links ($\alpha$ sets with $k$-parallel links and one
set of $\epsilon$-direct links) are computed as follows.
The $k$ links from the source to the $i$-th node in the middle layer are associated with the $i$-th codeword
of $\C$. From this codeword any $k$ linearly independent vectors are chosen. The $h$ elements of $\F_q$
in each such vector of length $h$ are the $h$ coding coefficients on a related edge from the source to the $i$-th node
in the middle layer. The middle layer nodes transmit on its outgoing links the exact information
it receives from the source. Given a receiver $R$, the $\alpha k$ vectors on the edges entering $R$ (formed from
coding coefficients on the edges) span a subspace $X$ whose dimension is at least $h - \epsilon$. On the $\epsilon$
edges entering $R$ from the source, there are coding coefficients whose related vectors complete $X$ to $\F_q^h$.
These coding vectors form the scalar linear solution for the network.
A different analysis for the combination network, i.e., $\epsilon =0$ and $k =1$ was
given by Riis and Ahlswede~\cite{RiAh06}. For this we need to define an $(n,M,d)$ code over $\F_q$ to be a subset
of $M$ words of length~$n$ over~$\F_q$ with minimum Hamming distance $d$.
The functions on the edges related to the scalar nonlinear solution implied by the
following theorem will be explained (in the general context of the generalized combination
networks) in Section~\ref{sec:subfam_Hamm}, where they are relevant.

\begin{theorem}[Proposition 3~\cite{RiAh06}]
\label{thm:RiAh}
The $(0,1)$-$\mathcal{N}_{h,r,s}$ network is solvable over $\F_q$ if and only if there
exists an $(r,q^h,r-s+1)$ code over $\F_q$.
\end{theorem}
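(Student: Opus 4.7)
The plan is to prove the two directions separately, treating the scalar (possibly nonlinear) solution as a collection of functions on the edges leaving the source and translating between them and words of a block code.

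For the forward direction, suppose the $(0,1)$-$\mathcal{N}_{h,r,s}$ network is solvable over $\F_q$. Since $k=1$, each middle-layer node $i$ has a single incoming link from the source carrying a symbol $f_i(m)\in\F_q$ that is a function of the message vector $m\in\F_q^h$; the node simply forwards this symbol on each of its outgoing edges. For each $m$ define the word $c(m)=(f_1(m),\dots,f_r(m))\in\F_q^r$, and let $C=\{c(m):m\in\F_q^h\}$. I would first show that $|C|=q^h$: pick any receiver $R$, connected (since $\epsilon=0$ and $k=1$) to exactly $s$ middle-layer nodes, whose indices form a set $S\subseteq\{1,\dots,r\}$ with $|S|=s$; $R$ must recover $m$ from the symbols $(f_i(m))_{i\in S}$, so the restriction $m\mapsto c(m)|_S$ is injective, hence so is $m\mapsto c(m)$. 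Then I would show the minimum Hamming distance is at least $r-s+1$: if distinct $c(m),c(m')$ agreed on $s$ or more coordinates, pick an $s$-subset $S$ on which they agree, and the receiver whose incoming-edge set is $S$ could not distinguish $m$ from $m'$, contradicting solvability. Thus $C$ is an $(r,q^h,r-s+1)$ code.

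For the converse, given an $(r,q^h,r-s+1)$ code $C\subseteq\F_q^r$, fix any bijection $\phi:\F_q^h\to C$. Assign to the edge from the source to the $i$-th middle node the function $m\mapsto\phi(m)_i$, and let each middle node forward its received symbol on all outgoing edges. A receiver $R$ with incoming-edge index set $S$ ($|S|=s$) then observes $\phi(m)|_S$; because $C$ has minimum distance $r-s+1$, no two codewords can agree on $s$ positions, so $\phi(m)|_S$ determines $\phi(m)$ and hence $m$. This exhibits a scalar (nonlinear) solution.

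The argument is short and the only real subtlety is the equivalence "any $s$ coordinates of any two codewords disagree somewhere" $\iff$ "minimum Hamming distance $\ge r-s+1$", together with the observation that there is a receiver for every $s$-subset of middle-layer nodes, so the two quantifiers line up exactly. The hardest bookkeeping step is being explicit about what a middle node can do when it has only a single incoming link, so that the scalar (nonlinear) nature of the solution is transparent; once this is made precise the correspondence between solutions and codes is immediate.
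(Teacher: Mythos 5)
Your proof is correct. The paper does not prove this theorem itself---it is imported from Riis and Ahlswede as Proposition~3, with the edge functions only sketched later in Section~\ref{sec:subfam_Hamm}---but your correspondence between middle-layer nodes and codeword coordinates, together with the equivalence of ``no two codewords agree on $s$ positions'' with ``minimum Hamming distance at least $r-s+1$,'' is exactly the argument that sketch describes, specialized to $\epsilon=0$ and $k=1$.
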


For a vector solution of the $(\epsilon,k)$-$\mathcal{N}_{h,r,\alpha k + \epsilon}$ network, the $h$ messages
are vectors of length~$\ell$ over~$\F_q$. Therefore, the total number of entries in the messages is $h \ell$
and the $h$ messages span an ${(h \ell)\text{-space}}$. Hence,
each receiver should obtain the space $\F_q^{h \ell}$ from its incoming edges. The source can send each receiver
an $(\epsilon \ell)$-subspace. This implies that each receiver should obtain a subspace whose dimension is at least
$(h - \epsilon)\ell$ from the related $\alpha$ middle layer nodes. On each set of $k$ parallel links
a node can send a $(k \ell)$-subspace of~$\F_q^{h \ell}$.
Hence, similarly to the scalar linear coding solution,
a vector solution for the network exists if and only if there exists a Grassmannian code with $r$
$(k \ell)$-subspaces of~$\F_q^{h \ell}$, such that each subset of $\alpha$ codewords ($(k \ell)$-subspaces)
spans a subspace whose dimension is at least $(h - \epsilon)\ell$.


\section{Covering/Multiple Grassmannian Codes}
\label{sec:gen_dist}

In this section we provide the formal definition for the codes required to solve the
generalized combination networks. We define two distance measures on these codes and
prove that Grassmannian codes and subspace designs,
are subfamilies of the related family of codes. We present some basic properties of these codes
and their connection to the network coding solutions for the generalized combination networks.

An $\alpha$-$(n,k,\delta)^c_q$ \emph{covering Grassmannian code} (code in short) $\C$ is a subset of $\cG_q(n,k)$
such that each subset of $\alpha$ codewords of $\C$ spans a subspace whose dimension is at
least $\delta +k$ in~$\F_q^n$.
The following theorem is easily verified.
\begin{theorem}
\label{thm:equ_dist}
$\C \in \cG_q(n,k)$ has minimum Grassmannian distance $\delta$ if and only if $\C$ is a\linebreak 2-$(n,k,\delta)^c_q$
code.
\end{theorem}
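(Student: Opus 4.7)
The plan is to reduce both sides of the equivalence to the same inequality via the standard dimension identity for the sum of two subspaces. For any two $k$-subspaces $X, Y \in \cG_q(n,k)$, we have
\[
\dim(X+Y) \;=\; \dim X + \dim Y - \dim(X \cap Y) \;=\; 2k - \dim(X \cap Y),
\]
so the Grassmannian distance satisfies $d_G(X,Y) = k - \dim(X \cap Y) = \dim(X+Y) - k$. Equivalently, $\dim(X+Y) \geq k + \delta$ if and only if $d_G(X,Y) \geq \delta$.

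With this identity in hand, both implications are immediate. For the forward direction, I would assume $\C$ has minimum Grassmannian distance $\delta$, so $d_G(X,Y) \geq \delta$ for every pair of distinct codewords $X,Y \in \C$; the identity then gives $\dim(X+Y) \geq k + \delta$, which is exactly the defining property of a 2-$(n,k,\delta)^c_q$ covering Grassmannian code. The converse is obtained by reading the same chain of equalities in the opposite direction: if every $2$-subset $\{X,Y\} \subseteq \C$ spans a subspace of dimension at least $k + \delta$, then $d_G(X,Y) = \dim(X+Y) - k \geq \delta$, so the minimum Grassmannian distance of $\C$ is at least $\delta$.

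There is essentially no obstacle here; the statement is a one-line translation between two equivalent ways of measuring how close two $k$-subspaces are. The only minor care point is that a ``subset of $\alpha = 2$ codewords'' consists of two distinct codewords, which matches precisely the standard convention that minimum distance is taken over pairs of distinct codewords, so the $X = Y$ case does not need to be treated separately.
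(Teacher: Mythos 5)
Your proposal is correct and follows essentially the same route as the paper's own proof: both directions are reduced to the dimension identity $\dim(X+Y)=2k-\dim(X\cap Y)$, giving $d_G(X,Y)=\dim(X+Y)-k$, from which the equivalence is immediate. The remark about distinct codewords is a fine (and valid) point of care, though the paper does not dwell on it.
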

\begin{proof}
A code $\C \in \cG_q (n,k)$ has minimum Grassmannian distance $\delta$ if for each two distinct
codewords $X,Y \in \C$ we have $\delta \leq d_G (X,Y) = k - \dim (X \cap Y)$.
Since $\dim (X \cap Y) = \dim X + \dim Y - \dim (X \cup Y) = 2k - \dim (X \cup Y)$ it follows that
$$
\dim (X \cup Y) = 2k - \dim (X \cap Y) \geq k+ \delta ~,
$$
and hence $\C$ is a 2-$(n,k,\delta)^c_q$ code.

On the other hand if $\C$ is a 2-$(n,k,\delta)^c_q$ code then for each two distinct
codewords ${X,Y \in \C}$,
$$
k+ \delta \leq \dim  (X \cup Y) = \dim X + \dim Y - \dim (X \cap Y)
$$
$$
= 2k - \dim (X \cap Y)~,
$$
which implies that
$$
\delta \leq k - \dim (X \cap Y) = d_G (X,Y)~,
$$
i.e., $\C$ has minimum Grassmannian distance $\delta$.
\end{proof}

Theorem~\ref{thm:equ_dist} implies that the Grassmannian codes with the Grassmannian distance form a subfamily of the
$\alpha\text{-}(n,k,\delta)^c_q$ codes. It also implies that it is natural to define the quantity $\delta$ as
the minimum $\alpha\text{-}$\emph{Grassmannian distance} of the code and to define the quantity $k+\delta$
as the minimum $\alpha\text{-}$\emph{Grassmannian covering}, where the $2\text{-}$Grassmannian distance
is just the Grassmannian distance. In other words, the $\alpha\text{-}$\emph{Grassmannian covering}
of $\alpha$ subspaces in $\cG_q(n,k)$ is the dimension of the subspace which they span in~$\F_q^n$,
and the $\alpha\text{-}$\emph{Grassmannian distance} is the $\alpha\text{-}$\emph{Grassmannian covering}
minus $k$. A code $\C \in \cG_q(n,k)$ has minimum $\alpha\text{-}$\emph{Grassmannian distance} $\delta$ if
each subset of $\alpha$ codewords of $\C$ has $\alpha\text{-}$\emph{Grassmannian distance} at least $\delta$.
For such a code the minimum $\alpha\text{-}$\emph{Grassmannian covering} is $k+\delta$.
The quantity $\tilde{\cB}_q(n,k,\delta;\alpha)$ will denote the maximum size
of an ${\alpha\text{-}(n,k,\delta)^c_q}$ code. For the solution of the generalized combination network
one can use two identical codewords in such a code. But, from a coding theory point of view it is more
interesting to consider such codes in which there are no repeated codewords.
For this we define the quantity $\cB_q(n,k,\delta;\alpha)$ to be the maximum size
of an ${\alpha\text{-}(n,k,\delta)^c_q}$ code in which there are no repeated codewords.
Nevertheless, codes with repeated codewords are sometimes essential, at least for the network
coding solution, as is demonstrated in the next example.

\begin{example}
\label{ex:need_repeat}
Consider the $(1,1)$-$\mathcal{N}_{3,r,4}$ network in Figure~\ref{fig:PicNet}. In a scalar linear solution, a receiver
should get from its three incoming links a subspace of $\F_q^3$ whose dimension is
at least two. On each link a one-dimensional subspace of $\F_q^3$ is sent. Hence, the largest $r$ for
which the network is solvable over~$\F_q$ is $2 \sbinomq{3}{1}$. The solution consists of
a ${3\text{-}(3,1,1)^c_q}$ code which contains all one-dimensional subspaces of $\F_q^3$,
each one is contained twice in the code. This implies that $\tilde{\cB}_q(3,1,1;3) =2\sbinomq{3}{1}$,
while $\cB_q(3,1,1;3) =\sbinomq{3}{1}$.
\end{example}

\begin{figure}[t]
\centering
\includegraphics[width=0.88\columnwidth]{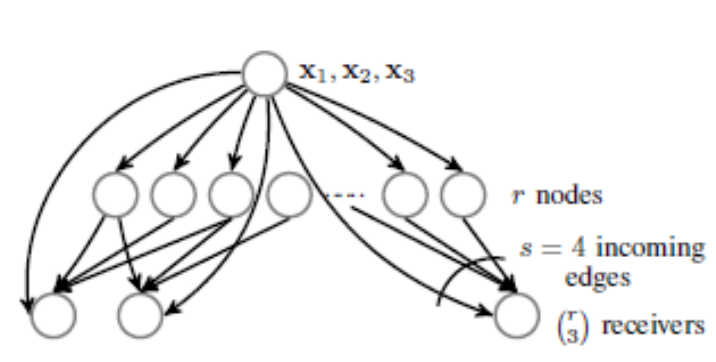}
\caption{the $(1,1)$-$\mathcal{N}_{3,r,4}$ network, each one of the $\binom{r}{3}$ receivers obtains its information
from three links of three distinct nodes in the middle layer and one direct link from the source,
where x$_1$, x$_2$, and x$_3$ are the three messages.}
\label{fig:PicNet}
\end{figure}

We continue in this section with the assumption that codewords can be repeated, since
the network coding solution can use repeated codewords.
In Section~\ref{sec:bounds} we will discuss bounds only for codes with non-repeated codewords,
and the difference when we allow repeated codewords.
From our discussion it can be readily verified
that if $\C$ is an $\alpha$-$(n,k,\delta)^c_q$ code which attains $\tilde{\cB}_q(n,k,\delta;\alpha)$ then
$\C$ solves the $(\epsilon,k)$-$\mathcal{N}_{n,r,\alpha k+\epsilon}$ network,
where $\epsilon \geq n-\delta-k$ and $r \leq \tilde{\cB}_q(n,k,\delta;\alpha)$.
Such a code implies the largest $r$ possible for such a network, given fixed $n$, $k$, $\alpha$, and $\epsilon$.
The largest $r$ means the maximum number of nodes that can be taken for the middle layer.
Clearly from our previous discussion, such a code has
parameters with the minimum requirements which are
necessary to solve such a network. These requirements are given
in the following theorem which generalizes Theorem~\ref{thm:RiAh}.

\begin{theorem}
\label{thm:gen_RiAh}
The $(\epsilon,k)$-$\mathcal{N}_{h,r,\alpha k +\epsilon}$ network is solvable over $\F_q$ if and only if there
exists an $\alpha$-$(h,k,h-k-\epsilon)^c_q$ code with $r$ codewords.
\end{theorem}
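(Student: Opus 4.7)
The proof is essentially a formalization of the informal argument given in the paragraph preceding the theorem, together with the parameter identification $\delta = h - k - \epsilon$. The plan is to prove both implications directly from the definitions.

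For the ``if'' direction, suppose $\C = \{X_1, \dots, X_r\} \subseteq \cG_q(h,k)$ is an $\alpha$-$(h,k,h-k-\epsilon)_q^c$ code. I construct a scalar linear solution as follows. For each middle-layer node $i$, pick an ordered basis of $X_i$ and use these $k$ vectors of $\F_q^h$ as the global coding vectors on the $k$ parallel edges from the source to node $i$; thus the $k$ symbols received at node $i$ are $k$ linearly independent combinations of the $h$ messages, and node $i$ forwards them unchanged on its outgoing edges. Fix a receiver $R$ connected to middle nodes indexed by $\{i_1, \dots, i_\alpha\}$. The $\alpha k$ global coding vectors on the edges entering $R$ from the middle layer together span the subspace $X_{i_1} + \cdots + X_{i_\alpha}$, whose dimension is at least $(h-k-\epsilon) + k = h-\epsilon$ by the defining property of the code. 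Choose any $\epsilon$ vectors in $\F_q^h$ whose cosets complete this subspace to $\F_q^h$, and use them as the global coding vectors on the $\epsilon$ direct edges from the source to $R$. Then the $s = \alpha k + \epsilon$ incoming vectors at $R$ span $\F_q^h$, so $R$ can decode all $h$ messages by solving a system of $h$ linearly independent equations.

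For the ``only if'' direction, suppose the network admits a scalar linear solution over $\F_q$. For each middle node $i$, let $Y_i \subseteq \F_q^h$ be the subspace spanned by the global coding vectors on the $k$ incoming edges from the source; thus $\dim Y_i \le k$. Since the node forwards only linear combinations of what it receives, every outgoing vector of node $i$ lies in $Y_i$. Extend each $Y_i$ arbitrarily to a $k$-dimensional subspace $X_i \supseteq Y_i$ of $\F_q^h$ (possible because $k \leq h$), and set $\C = \{X_1, \dots, X_r\}$. For any receiver $R$ attached to middle nodes $\{i_1,\dots,i_\alpha\}$, the incoming global coding vectors at $R$ span a subspace contained in $(Y_{i_1} + \cdots + Y_{i_\alpha}) + U$, where $U$ is the subspace spanned by the $\epsilon$ direct-link coding vectors, so $\dim U \le \epsilon$. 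Because $R$ decodes all $h$ messages, the span of its incoming vectors must be $\F_q^h$, which forces
\[
\dim(X_{i_1} + \cdots + X_{i_\alpha}) \ge \dim(Y_{i_1} + \cdots + Y_{i_\alpha}) \ge h - \epsilon = (h-k-\epsilon) + k .
\]
Thus $\C$ is an $\alpha$-$(h,k,h-k-\epsilon)_q^c$ code with $r$ codewords (possibly with repetition).

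There is no substantive obstacle; the proof is a careful bookkeeping exercise. The only mild subtlety is in the ``only if'' direction, where one must not assume a priori that the source sends $k$ linearly independent combinations to each middle node: a solution may use a middle node whose received subspace has dimension less than $k$. This is handled by padding $Y_i$ to a $k$-dimensional $X_i$, which only enlarges sums and hence preserves the covering inequality. Note also that the resulting $\C$ may have repeated codewords, which is exactly why Example~\ref{ex:need_repeat} and the definition of $\tilde{\cB}_q(n,k,\delta;\alpha)$ are needed; this is consistent with the theorem's statement, which asserts existence of a code (with $r$ codewords) rather than distinctness of codewords. The theorem specializes to Theorem~\ref{thm:RiAh} when $\epsilon = 0$ and $k = 1$ (together with the equivalence between $1$-subspaces of $\F_q^h$ and vectors of an alphabet of size $q^h$, and the translation of the covering condition into a minimum Hamming distance condition worked out later in the paper).
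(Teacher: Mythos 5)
Your proof is correct and follows essentially the same route as the paper, which establishes this theorem through the informal discussion in Section~\ref{sec:gen_comb} (each receiver needs an $(h-\epsilon)$-subspace from its $\alpha$ middle-layer nodes, each of which contributes a $k$-subspace, with the direct links completing the span to $\F_q^h$). Your only addition is making explicit the padding of a possibly lower-dimensional received subspace $Y_i$ up to a $k$-subspace in the ``only if'' direction, which the paper handles by the remark that one may assume w.l.o.g.\ that each middle-layer node holds a $k$-subspace.
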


Thus, Theorem~\ref{thm:gen_RiAh} implies that each such covering Grassmannian code of the maximum size is exactly
what is required to solve a certain instance of the generalized combination networks.

The way in which the code solves the generalized combination network, as described,
is very natural when we consider the definition of the generalized combination network.
It implies the generalization for the Grassmannian distance, namely the $\alpha$-Grassmannian distance. Since
some might argue that this generalization is less natural from a point of view of a code definition,
we have also defined the $\alpha$-Grassmannian covering which yields a natural interpretation for
the $\alpha$-Grassmannian distance. Now, we will translate this definition into the
requirement from a packing point of view. For this purpose we will need to use the \emph{dual} subspace $V^\perp$
of a given subspace $V$ in~$\F_q^n$, and the orthogonal complement of a given code~$\C$.
For a code $\C$ in~$\cG_q(n,k)$ the
\emph{orthogonal complement} $\C^\perp$ is defined by
$$\C^\perp \deff \{ V^\perp ~:~ V \in \C \}~.$$
It is well-known~\cite[Lemma 13]{EtVa11} that the minimum subspace distance of~$\C$ and
the minimum subspace distance of~$\C^\perp$ are equal.
The following lemma is also well known.

\begin{lemma}[Lemma 12~\cite{EtVa11}]
\label{lem:two_perp}
For any two subspaces $U, V$ of $\F_q^n$ we have that $U^\perp \cap V^\perp =(U+V)^\perp$.
\end{lemma}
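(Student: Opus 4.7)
The plan is to establish the set equality by proving containment in both directions using only the definition of the dual subspace as the set of vectors orthogonal to every vector in the given subspace. Throughout, I use the standard inner product on $\F_q^n$, and I write $w \perp x$ to mean $\langle w, x \rangle = 0$.

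For the containment $(U+V)^\perp \subseteq U^\perp \cap V^\perp$, I would take an arbitrary $w \in (U+V)^\perp$ and observe that since $U \subseteq U+V$ and $V \subseteq U+V$, any $w$ that is orthogonal to every element of $U+V$ is in particular orthogonal to every element of $U$ and to every element of $V$. Thus $w \in U^\perp$ and $w \in V^\perp$, so $w \in U^\perp \cap V^\perp$.

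For the reverse containment $U^\perp \cap V^\perp \subseteq (U+V)^\perp$, I would take $w \in U^\perp \cap V^\perp$ and an arbitrary element of $U+V$, which by definition has the form $u+v$ with $u \in U$ and $v \in V$. Bilinearity of the inner product gives
\[
\langle w, u+v \rangle = \langle w, u \rangle + \langle w, v \rangle = 0 + 0 = 0,
\]
where the two zeros come from $w \in U^\perp$ and $w \in V^\perp$ respectively. Since this holds for every element of $U+V$, we conclude $w \in (U+V)^\perp$.

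There is really no main obstacle here: the lemma is a purely formal consequence of the definition of the orthogonal complement and the bilinearity of the inner product, and the proof is the same as in the classical vector space setting over $\R$, with no finite-field subtleties (the nondegeneracy of the form is not even needed for this particular identity). A one-paragraph proof combining the two inclusions above suffices.
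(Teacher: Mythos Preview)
Your proof is correct; the two-inclusion argument using bilinearity of the form is the standard one and there is nothing to add. Note, however, that the paper does not actually supply its own proof of this lemma: it is quoted from~\cite{EtVa11} as a known fact, so there is no ``paper's approach'' to compare against.
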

Clearly, by induction we have the following consequence from Lemma~\ref{lem:two_perp}.
\begin{corollary}
\label{cor:multi_perp}
For any given set of $\alpha$ subspaces $V_1,V_2,\ldots,V_\alpha$ of $\F_q^n$ we have
$$
\bigcap_{i=1}^\alpha V_i^\perp = \left( \sum_{i=1}^\alpha V_i \right)^\perp ~.
$$
\end{corollary}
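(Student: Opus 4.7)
The plan is straightforward induction on $\alpha$, using Lemma \ref{lem:two_perp} as the two-term step. The base case $\alpha = 1$ is a tautology, and the case $\alpha = 2$ is precisely Lemma \ref{lem:two_perp}, so the only real work is setting up the inductive step cleanly.

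Assume the identity holds for any $\alpha-1$ subspaces, where $\alpha \geq 3$. Given $V_1,\ldots,V_\alpha \subseteq \F_q^n$, I would first split off the last term and write
$$
\bigcap_{i=1}^\alpha V_i^\perp = \left(\bigcap_{i=1}^{\alpha-1} V_i^\perp\right) \cap V_\alpha^\perp.
$$
By the induction hypothesis, the first factor equals $\bigl(\sum_{i=1}^{\alpha-1} V_i\bigr)^\perp$, so the right-hand side becomes
$$
\left(\sum_{i=1}^{\alpha-1} V_i\right)^{\!\perp} \cap\, V_\alpha^\perp.
$$
Applying Lemma \ref{lem:two_perp} with $U = \sum_{i=1}^{\alpha-1} V_i$ and $V = V_\alpha$ then gives
$$
\left(\sum_{i=1}^{\alpha-1} V_i + V_\alpha\right)^{\!\perp} = \left(\sum_{i=1}^\alpha V_i\right)^{\!\perp},
$$
which is the desired expression.

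There is no genuine obstacle here; the only point requiring a word of care is the implicit use of associativity of subspace addition in collapsing $\sum_{i=1}^{\alpha-1} V_i + V_\alpha$ to $\sum_{i=1}^\alpha V_i$, which is immediate from the definition of the sum of subspaces. The proof could equivalently be carried out by a direct element-chase, noting that $x \in \bigcap_i V_i^\perp$ iff $x$ is orthogonal to every element of every $V_i$, iff $x$ is orthogonal to every element of $\sum_i V_i$; I would prefer the inductive presentation since the corollary is explicitly framed as a consequence of the lemma.
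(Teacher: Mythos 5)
Your proof is correct and follows exactly the route the paper intends: the paper simply asserts the corollary "by induction" from Lemma~\ref{lem:two_perp}, and your argument is the standard fleshing-out of that induction. Nothing to add.
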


Corollary~\ref{cor:multi_perp} induces a new definition of
a distance measure for the orthogonal complements of the Grassmannian codes
which solve the generalized combination~networks.
For a Grassmannian code
${\C \in \cG_q(n,k)}$, the minimum $\lambda$-\emph{multiple Grassmannian distance} is $k-t +1$, where $t$ is the smallest integer
such that each $t$-subspace of $\F_q^n$ is contained in at most $\lambda$
$k$-subspaces of $\C$. In the sequel it will be explained why this definition is a natural generalization
of the Grassmannian distance. Moreover, we will present it later in a more natural way,
which is similar to the definition of a subspace design.
These will be understood from the results given in the rest of this section.

\begin{theorem}
\label{thm:gen_dist}
If $\C \in \cG_q(n,k)$ is an $\alpha$-$(n,k,\delta)^c_q$ code then~$\C^\perp \in \cG_q(n,n-k)$
has minimum $(\alpha-1)$-\emph{multiple Grassmannian distance} $\delta$.
\end{theorem}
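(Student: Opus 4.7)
The plan is to prove the theorem by a short contradiction argument built on Corollary~\ref{cor:multi_perp}. Unpacking the definition of the $(\alpha-1)$-multiple Grassmannian distance, the claim reduces to showing that for $t := n-k-\delta+1$, no $t$-dimensional subspace of $\F_q^n$ is contained in $\alpha$ or more codewords of $\C^\perp$. Once that is established, the smallest integer with the required property is at most $t$, so the associated minimum distance $(n-k)-t+1$ is at least $\delta$, as desired.

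The core step is then the following contradiction. Suppose, for some $t$-subspace $T \subseteq \F_q^n$, that there were codewords $V_1^\perp, \ldots, V_\alpha^\perp \in \C^\perp$ (with corresponding preimages $V_1, \ldots, V_\alpha \in \C$) all containing $T$. Then $T$ lies inside $\bigcap_{i=1}^\alpha V_i^\perp$, which by Corollary~\ref{cor:multi_perp} equals $\bigl(\sum_{i=1}^\alpha V_i\bigr)^\perp$. Combining $\dim T = t$ with the standard identity $\dim U + \dim U^\perp = n$ then yields $\dim \sum_{i=1}^\alpha V_i \leq n-t = k+\delta-1$. On the other hand, the hypothesis that $\C$ is an $\alpha$-$(n,k,\delta)^c_q$ code forces $\dim \sum_{i=1}^\alpha V_i \geq k+\delta$, a direct contradiction that closes the argument.

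The only subtle point --- and the closest thing to an obstacle --- is the treatment of repeated codewords, since Example~\ref{ex:need_repeat} shows that $\C$, and therefore $\C^\perp$, may genuinely be a multi-set. The intended convention is that ``subset of $\alpha$ codewords'' in the definition of an $\alpha$-$(n,k,\delta)^c_q$ code is read in the multi-subset sense, and that ``contained in at most $\lambda$ $k$-subspaces of $\C^\perp$'' likewise counts multiplicities. Applied consistently on both sides of the duality $V \leftrightarrow V^\perp$, the list $V_1,\dots,V_\alpha$ produced from the assumed configuration in $\C^\perp$ remains a valid multi-subset of $\C$ on which the covering hypothesis applies, so the contradiction goes through verbatim without any separate case analysis for repetitions.
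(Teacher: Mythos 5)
Your proposal is correct and is essentially the paper's own argument: both rest on Corollary~\ref{cor:multi_perp} and the dimension identity $\dim U + \dim U^\perp = n$, with the paper phrasing the key step directly ($\dim\bigl(\bigcap_{i=1}^\alpha V_i^\perp\bigr) \leq n-\delta-k$, hence any $(n-\delta-k+1)$-subspace lies in at most $\alpha-1$ codewords of $\C^\perp$) while you phrase the same dimension count as a contradiction. Your added remark on the multi-set convention for repeated codewords is a reasonable clarification but does not change the substance.
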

\begin{proof}
By the definition of an $\alpha$-$(n,k,\delta)^c_q$ code it follows that for each $\alpha$ subspaces
$V_1,V_2,\ldots,V_\alpha$ of~$\C$ we have that $\dim \bigl( \sum_{i=1}^\alpha V_i \bigr) \geq \delta +k$
and hence ${\dim \bigl( \sum_{i=1}^\alpha V_i \bigr)^\perp \leq n- \delta -k}$.
Therefore, by Corollary~\ref{cor:multi_perp} we have that $\dim \bigl( \bigcap_{i=1}^\alpha V_i^\perp \bigr) \leq n- \delta -k$.
This implies that each subspace of dimension $n-\delta -k+1$ of $\F_q^n$ can be contained in at most
$\alpha-1$ codewords of $\C^\perp$. Thus, since $\C^\perp \in \cG_q(n,n-k)$, it follows by definition that the minimum
$(\alpha-1)$-multiple Grassmannian distance of $\C^\perp$ is $(n-k)-(n-\delta-k+1)+1=\delta$.
\end{proof}

Theorem~\ref{thm:gen_dist} leads to a new definition for Grassmannian codes
(based on orthogonal complements of $\alpha$-$(n,k,\delta)^c_q$ codes).
A $t$-$(n,k,\lambda)^m_q$ \emph{multiple Grassmannian code} (code in short)~$\C$ is a subset of $\cG_q(n,k)$
such that each $t\text{-}$subspace of $\F_q^n$ is contained in at most $\lambda$ codewords of~$\C$.
Similarly, let $\tilde{\cA}_q (n,k,t;\lambda)$ denote the maximum size of a $t$-$(n,k,\lambda)^m_q$ code.
Let $\cA_q (n,k,t;\lambda)$ denote the maximum size of a $t$-$(n,k,\lambda)^m_q$ code, where there
are no repeated codewords.
Clearly, a subspace design $t\text{-}(n,k,\lambda)_q$ is a $t$-$(n,k,\lambda)^m_q$ code.
It should be noted that also in combinatorial designs repeated blocks are usually not
considered in the literature.
One can easily verify that
\begin{theorem}
\label{thm:equ_pack}
$\C \in \cG_q(n,k)$ has minimum Grassmannian distance ${k-t+1}$ if and only if $\C$ is a $t$-$(n,k,1)^m_q$ code.
\end{theorem}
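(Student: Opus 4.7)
The statement is an equivalence between a distance condition on pairs of codewords and a covering-multiplicity condition on $t$-subspaces, so the natural strategy is to prove each direction by contrapositive, exploiting the simple observation that $\dim(X \cap Y) \geq t$ is equivalent to $X \cap Y$ containing some $t$-subspace of $\F_q^n$.

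For the forward direction, I would assume $\C$ has minimum Grassmannian distance $k-t+1$ and suppose, toward a contradiction, that some $t$-subspace $T$ of $\F_q^n$ is contained in two distinct codewords $X, Y \in \C$. Then $T \subseteq X \cap Y$, so $\dim(X \cap Y) \geq t$, which gives
\[
d_G(X,Y) = k - \dim(X \cap Y) \leq k - t < k - t + 1,
\]
contradicting the minimum distance assumption. Hence every $t$-subspace lies in at most one codeword, i.e., $\C$ is a $t$-$(n,k,1)^m_q$ code.

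For the reverse direction, I would assume $\C$ is $t$-$(n,k,1)^m_q$ and take any two distinct codewords $X, Y \in \C$. If $\dim(X \cap Y) \geq t$, then, since $X \cap Y$ is a subspace of $\F_q^n$ of dimension at least $t$, it contains some $t$-dimensional subspace $T$; but then $T \subseteq X$ and $T \subseteq Y$, which violates the $\lambda = 1$ multiplicity. Therefore $\dim(X \cap Y) \leq t-1$, and
\[
d_G(X,Y) = k - \dim(X \cap Y) \geq k - t + 1,
\]
so the minimum Grassmannian distance of $\C$ is at least $k-t+1$. To see that the minimum is exactly $k-t+1$ (and not larger), one can note that the statement as written only needs the characterization of codes whose minimum distance is \emph{at least} $k-t+1$, which is the standard convention in this paper, or observe that a $t$-$(n,k,1)^m_q$ code with minimum distance strictly greater would simply be a $(t-1)$-$(n,k,1)^m_q$ code by the same argument, so the parameter $t$ is the smallest exponent for which the multiplicity bound holds.

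There is no real obstacle here: the only content is the elementary fact that a subspace of dimension $\geq t$ contains a $t$-subspace, together with the definition $d_G(X,Y) = k - \dim(X \cap Y)$. The proof is essentially a rephrasing of the two conditions into the common language of intersection dimensions.
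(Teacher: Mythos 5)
Your proof is correct and follows essentially the same route as the paper's: both directions reduce to the equivalence between $\dim(X\cap Y)\leq t-1$ and no $t$-subspace lying in two distinct codewords, combined with $d_G(X,Y)=k-\dim(X\cap Y)$. The paper phrases the forward direction directly rather than by contradiction, and it reads ``minimum distance $\delta$'' as ``at least $\delta$'' (as in its proof of the companion Theorem~\ref{thm:equ_dist}), so your closing caveat about exactness is unnecessary but harmless.
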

\begin{proof}
If $\C$ has minimum Grassmannian distance ${k-t+1}$, then for each two distinct codewords
$X,Y \in \C$ we have $k-t+1 \leq d_G (X,Y) = k - \dim (X \cap Y)$. It implies that
$\dim (X \cap Y) \leq t-1$ and hence each $t\text{-}$subspace of $\F_q^n$ is contained in
at most one codeword of~$\C$, i.e., $\C$ is a $t$-$(n,k,1)^m_q$ code.

On the other hand, if $\C$ is a $t$-$(n,k,1)^m_q$ code, then any $t$-subspace of $\F_q^n$ is contained in at
most one codeword of~$\C$. It follows that for any $X,Y \in \C$ we have $\dim (X \cap Y) \leq t-1$.
Hence,
$$
d_G (X,Y) = k - \dim (X \cap Y) \geq k - (t-1) = k-t+1~,
$$
and therefore, $\C$ has minimum Grassmannian distance ${k-t+1}$.
\end{proof}

Clearly, by Theorem~\ref{thm:equ_pack} we have that the $\lambda$-multiple Grassmannian distance is a generalization
of the Grassmannian distance (which is a $1$-multiple Grassmannian distance).
The generalization implied by Theorem~\ref{thm:equ_pack} is for the packing interpretation of a $t$-$(n,k,1)^m_q$ code.
This is also a generalization of block design over $\F_q$ (a subspace design).
If each $t$-subspace is contained exactly once in a $t$-$(n,k,1)^m_q$ code $\C$, then $\C$ is
a $q$-Steiner system $S_q(t,k,n)$. If
each $t$-subspace is contained exactly $\lambda$ times in a $t$-$(n,k,\lambda)^m_q$ code $\C$, then $\C$ is
a $t$-$(n,k,\lambda)_q$ subspace design. Similarly to Theorem~\ref{thm:gen_dist} we have

\begin{theorem}
\label{thm:gen_dist2}
If $\C \in \cG_q(n,k)$ is a $t$-$(n,k,\lambda)^m_q$ code then~$\C^\perp$
has minimum $(\lambda+1)$-\emph{Grassmannian covering} ${n -t +1}$
and minimum $(\lambda+1)$-\emph{Grassmannian distance} ${k -t +1}$.
\end{theorem}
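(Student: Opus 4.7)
The plan is to dualize the argument used for Theorem~\ref{thm:gen_dist}: translate the packing condition defining a $t$-$(n,k,\lambda)^m_q$ code into a covering condition on the orthogonal complement by swapping intersections with sums under $\perp$.

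First I would pick any $\lambda+1$ codewords $V_1, V_2, \ldots, V_{\lambda+1}$ of $\C$ and study $W \deff \bigcap_{i=1}^{\lambda+1} V_i$. Since $\C$ is a $t$-$(n,k,\lambda)^m_q$ code, no $t$-subspace of $\F_q^n$ is contained in more than $\lambda$ codewords of $\C$. Consequently, $W$ cannot contain any $t$-subspace, so $\dim W \leq t-1$.

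Next I would invoke the dual companion of Corollary~\ref{cor:multi_perp}: applying $\perp$ to both sides of that corollary (and using $(V^\perp)^\perp = V$), or equivalently iterating the identity $(U \cap V)^\perp = U^\perp + V^\perp$, yields
$$
\left( \bigcap_{i=1}^{\lambda+1} V_i \right)^{\!\perp} \;=\; \sum_{i=1}^{\lambda+1} V_i^\perp.
$$
Combining this with the bound on $\dim W$ gives
$$
\dim \sum_{i=1}^{\lambda+1} V_i^\perp \;=\; n - \dim W \;\geq\; n - t + 1.
$$
Since the $V_i^\perp$ form an arbitrary choice of $\lambda+1$ codewords of $\C^\perp$, this proves that the minimum $(\lambda+1)$-Grassmannian covering of $\C^\perp$ is at least $n-t+1$. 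Because the codewords of $\C^\perp$ all have dimension $n-k$, subtracting $n-k$ converts the covering bound into the minimum $(\lambda+1)$-Grassmannian distance bound $k-t+1$, as required.

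There is no real obstacle here; the only mildly subtle point is the dualization identity used above. It is worth noting as a sanity check that setting $\lambda = 1$ recovers the well-known fact (see Theorem~\ref{thm:equ_pack}) that if $\C$ has minimum Grassmannian distance $k-t+1$ then so does $\C^\perp$, consistent with the invariance of the subspace distance under orthogonal complementation.
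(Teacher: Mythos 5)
Your proposal is correct and follows essentially the same route as the paper: both arguments bound $\dim \bigcap_{i=1}^{\lambda+1} V_i$ by $t-1$ using the packing property, dualize via the identity $\bigl( \bigcap_{i=1}^{\lambda+1} V_i \bigr)^\perp = \sum_{i=1}^{\lambda+1} V_i^\perp$ (Corollary~\ref{cor:multi_perp} under orthogonal complementation), and conclude $\dim \sum_{i=1}^{\lambda+1} V_i^\perp \geq n-t+1$, then subtract $n-k$ for the distance. The only cosmetic difference is that you invoke $\dim W^\perp = n - \dim W$ directly, whereas the paper phrases the same step through a double orthogonal complement.
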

\begin{proof}
If $\C$ is a $t$-$(n,k,\lambda)^m_q$ code then for each $\lambda +1$
subspaces ${X_1,X_2,\ldots,X_{\lambda+1} \in \C}$ we have that
$$
\dim \bigcap_{i=1}^{\lambda+1} X_i < t ~.
$$
Therefore,
$$
t-1 \geq \dim \left( \left( \bigcap_{i=1}^{\lambda+1} X_i \right)^\perp \right)^\perp =  \dim \left( \sum_{i=1}^{\lambda+1} X_i^\perp \right)^\perp  ~,
$$
and hence,
$$
\dim \sum_{i=1}^{\lambda+1} X_i^\perp \geq n-t+1 ~.
$$
Thus, $\C^\perp \in \cG_q(n,n-k)$ has minimum $(\lambda+1)$-\emph{Grassmannian covering} ${n -t +1}$ and
minimum $(\lambda+1)$-\emph{Grassmannian distance} $(n-t+1)-(n-k)=k -t +1$.
\end{proof}
Combining Theorems~\ref{thm:gen_dist} and~\ref{thm:gen_dist2} yields the following results.
\begin{corollary}
\label{cor:equiv}
$~$
\begin{itemize}
\item[(1)]\label{cor:eq_perp1}
$\C \in \cG_q(n,k)$ is an $\alpha$-$(n,k,\delta)^c_q$ code if and only if $\C^\perp$
is an $(n-k-\delta+1)$-$(n,n-k,\alpha-1)^m_q$ code.
\item[(2)]\label{cor:equ_boundsR}
For any feasible $\delta$, $k$, $n$, and $\alpha$, $\tilde{\cB}_q (n,k,\delta;\alpha)=\tilde{\cA}_q(n,n-k,n-k-\delta+1;\alpha-1)$.
\item[(3)]\label{cor:equ_bounds1}
For any feasible $\delta$, $k$, $n$, and $\alpha$, $\cB_q (n,k,\delta;\alpha)=\cA_q(n,n-k,n-k-\delta+1;\alpha-1)$.
\item[(4)]\label{cor:eq_perp2}
$\C \in \cG_q(n,k)$ is a $t$-$(n,k,\lambda)^m_q$ code if and only if $\C^\perp$
is a $(\lambda+1)$-$(n,n-k,k-t+1)^c_q$ code.
\item[(5)]\label{cor:equ_bounds2}
For any feasible $t$, $k$, $n$, and $\lambda$, $\tilde{\cA}_q (n,k,t;\lambda)=\tilde{\cB}_q(n,n-k,k-t+1;\lambda+1)$.
\item[(6)]\label{cor:equ_bounds2}
For any feasible $t$, $k$, $n$, and $\lambda$, $\cA_q (n,k,t;\lambda)=\cB_q(n,n-k,k-t+1;\lambda+1)$.
\end{itemize}
\end{corollary}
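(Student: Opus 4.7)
The plan is to derive all six items of Corollary~\ref{cor:equiv} by combining Theorems~\ref{thm:gen_dist} and~\ref{thm:gen_dist2}, exploiting the fact that dualization $\C \mapsto \C^\perp$ is an involution on Grassmannians. The entire proof will rest on the observation that $(\C^\perp)^\perp = \C$, together with the correct bookkeeping of parameters after the dimension swap $k \leftrightarrow n-k$.

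First I would establish item~(1). For the forward direction, assume $\C \in \cG_q(n,k)$ is an $\alpha$-$(n,k,\delta)^c_q$ code. By Theorem~\ref{thm:gen_dist}, $\C^\perp \in \cG_q(n,n-k)$ has minimum $(\alpha-1)$-multiple Grassmannian distance $\delta$. Unpacking the definition of this distance measure (with ambient dimension $n-k$ in place of $k$), this says precisely that $t \deff (n-k)-\delta+1$ is the smallest integer such that each $t$-subspace of $\F_q^n$ lies in at most $\alpha-1$ codewords of $\C^\perp$; in particular $\C^\perp$ is a $(n-k-\delta+1)$-$(n,n-k,\alpha-1)^m_q$ code. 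For the converse, suppose $\C^\perp$ is such a code. Applying Theorem~\ref{thm:gen_dist2} to $\C^\perp \in \cG_q(n,n-k)$ with $t = n-k-\delta+1$ and $\lambda = \alpha-1$, we conclude that $(\C^\perp)^\perp = \C$ has minimum $\alpha$-Grassmannian distance $(n-k)-t+1 = \delta$, i.e.\ $\C$ is an $\alpha$-$(n,k,\delta)^c_q$ code. Item~(4) is proved by the same two-step argument with the roles of Theorems~\ref{thm:gen_dist} and~\ref{thm:gen_dist2} swapped.

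Items~(2), (3), (5), (6) then follow immediately from items~(1) and~(4): the map $\C \mapsto \C^\perp$ is a bijection between $\cG_q(n,k)$ and $\cG_q(n,n-k)$ which sends $\alpha$-$(n,k,\delta)^c_q$ codes bijectively to $(n-k-\delta+1)$-$(n,n-k,\alpha-1)^m_q$ codes (for (2) and (3)), and sends $t$-$(n,k,\lambda)^m_q$ codes bijectively to $(\lambda+1)$-$(n,n-k,k-t+1)^c_q$ codes (for (5) and (6)). Since this bijection preserves cardinality and sends distinct codewords to distinct codewords (because $V \mapsto V^\perp$ is itself an involutive bijection on the set of all subspaces), the extremal quantities on the two sides of each equation are equal, both with and without the no-repeated-codewords restriction.

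No real obstacle is expected: the argument is almost purely definitional once Theorems~\ref{thm:gen_dist} and~\ref{thm:gen_dist2} are in hand. The only thing to watch carefully is the parameter substitution after passing to the dual Grassmannian, where one must consistently replace the outer dimension $k$ by $n-k$ inside the definitions of the two distance measures; a one-off index shift here would produce an apparently plausible but wrong statement. I would therefore verify each parameter match-up explicitly before concluding.
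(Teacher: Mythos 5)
Your proposal is correct and follows essentially the same route as the paper, which simply states that the corollary is obtained by "combining Theorems~\ref{thm:gen_dist} and~\ref{thm:gen_dist2}"; you have merely spelled out the parameter bookkeeping and the involution $(\C^\perp)^\perp=\C$ that the paper leaves implicit. Nothing further is needed.
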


Theorem~\ref{thm:gen_RiAh} provides a necessary and sufficient condition for the
requirements to solve the
$(\epsilon,k)$-$\mathcal{N}_{h,r,s}$ network with a scalar linear network code over $\F_q$.
Theorem~\ref{thm:gen_RiAh} is generalized for a solution with vector network coding whose
vectors have length~$\ell$, as follows.

\begin{theorem}
\label{thm:gen_RiAh_Vec}
The $(\epsilon,k)$-$\mathcal{N}_{h,r,\alpha k +\epsilon}$ network is solvable with vectors of length $\ell$
over $\F_q$ if and only if there
exists an $\alpha$-$(h\ell,k \ell ,h\ell -k\ell-\epsilon \ell)^c_q$ code with $r$ codewords.
\end{theorem}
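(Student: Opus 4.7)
The plan is to mirror the discussion already given at the end of Section~\ref{sec:gen_comb} for the vector solution, which essentially shows that a vector solution of length $\ell$ for the $(\epsilon,k)$-$\mathcal{N}_{h,r,\alpha k + \epsilon}$ network is the same combinatorial object as a scalar solution for an analogous network but with parameters $(h,k,\epsilon)$ replaced by $(h\ell, k\ell, \epsilon\ell)$. So the proof should proceed exactly as the proof of Theorem~\ref{thm:gen_RiAh} (implicitly given before its statement), but carried out in the ambient space $\F_q^{h\ell}$ rather than $\F_q^h$.

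For the forward (``only if'') direction, I would start from an arbitrary vector network coding solution with vectors of length $\ell$ over $\F_q$. The global coding vector on each edge is an $\ell \times h\ell$ matrix over $\F_q$ describing how the $\ell$ symbols carried on that edge are obtained as $\F_q$-linear combinations of the $h\ell$ scalar components of the $h$ messages. For the $i$-th middle-layer node, the $k$ parallel edges incoming from the source give $k$ such matrices which together span a subspace $V_i \subseteq \F_q^{h\ell}$ of dimension at most $k\ell$; one may assume without loss of generality that $\dim V_i = k\ell$, since otherwise the parallel links can be reassigned to carry a full $(k\ell)$-dimensional subspace without reducing any receiver's decoding capability. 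Each receiver is connected to some $\alpha$ middle nodes, say $V_{i_1},\ldots,V_{i_\alpha}$, plus $\epsilon$ direct links from the source contributing at most an $(\epsilon\ell)$-subspace. Since the receiver must recover all of $\F_q^{h\ell}$, we must have $\dim(V_{i_1}+\cdots+V_{i_\alpha}) \geq h\ell - \epsilon\ell = (k\ell) + (h\ell - k\ell - \epsilon\ell)$. Hence $\{V_1,\ldots,V_r\}$ is an $\alpha$-$(h\ell, k\ell, h\ell - k\ell - \epsilon\ell)^c_q$ code with $r$ codewords.

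For the reverse (``if'') direction, given such a code $\C = \{V_1,\ldots,V_r\} \subseteq \cG_q(h\ell, k\ell)$, I would build a vector solution as follows. For each middle-layer node $i$, pick a basis $\{v^{(i)}_1,\ldots,v^{(i)}_{k\ell}\} \subseteq \F_q^{h\ell}$ of $V_i$, and group it into $k$ blocks of $\ell$ vectors; block $j$ then yields the $\ell \times h\ell$ global coding matrix for the $j$-th parallel edge from the source to node $i$. The middle nodes forward exactly what they receive along their outgoing edges. For a receiver connected to $V_{i_1},\ldots,V_{i_\alpha}$, the $\alpha k$ incoming edges from the middle layer span a subspace $W \subseteq \F_q^{h\ell}$ of dimension at least $h\ell - \epsilon\ell$ by the covering property. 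Choose any subspace $U$ of dimension exactly $\epsilon\ell$ with $U + W = \F_q^{h\ell}$, pick a basis of $U$ and split it into $\epsilon$ blocks of $\ell$ vectors; these become the global coding matrices of the $\epsilon$ direct edges from the source to the receiver. Then the receiver's $(\alpha k + \epsilon)\ell \times h\ell$ aggregated matrix has rank $h\ell$ and the decoding is possible.

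I do not anticipate a genuine obstacle: the argument is essentially the scalar linear argument for $\mathcal{N}_{h\ell, r, \alpha k\ell + \epsilon\ell}$ reinterpreted in vector form. The only point that requires mild care is the bookkeeping between the ``vector'' picture (matrices of size $\ell \times h\ell$ per edge) and the ``scalar'' picture (rows of global coding vectors in $\F_q^{h\ell}$), which is handled by the grouping-into-blocks step in both directions.
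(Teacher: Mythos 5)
Your proposal is correct and follows essentially the same route as the paper: the paper's proof likewise reduces the vector solution to the scalar picture in the ambient space $\F_q^{h\ell}$, observing that the direct links contribute at most an $(\epsilon\ell)$-subspace, each middle node at most a $(k\ell)$-subspace, and hence any $\alpha$ middle nodes must jointly span at least $(h-\epsilon)\ell$ dimensions. You simply spell out more explicitly the two directions and the block-of-$\ell$ bookkeeping that the paper leaves implicit by reference to its earlier discussion of the scalar case.
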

\begin{proof}
Since there are $h$ messages, each one is a vector of length $\ell$, it follows that the source has
a subspace of dimension $h\ell$. The source can send to each receiver in the $\epsilon$ direct links
any subspace of dimension $\epsilon \ell$. Therefore, any $\alpha$ nodes in the middle layer should
send to the related receiver an $(h\ell -\epsilon \ell)$-subspace of $\F_q^{h\ell}$. The source sends
to each one of the $r$ nodes in the middle layer a $(k\ell)$-subspace and hence for a vector solution to the
$(\epsilon,k)$-$\mathcal{N}_{h,r,\alpha k +\epsilon}$ network, a Grassmannian code $\C \in \cG_q(h\ell,k\ell)$ of size $r$, with
minimum $\alpha$-Grassmannian covering $h\ell -\epsilon \ell$, i.e., minimum
$\alpha$-Grassmannian distance $h\ell -\epsilon \ell -k\ell$, is required. Hence, $\C$
is an $\alpha$-$(h\ell,k \ell ,h\ell -k\ell-\epsilon \ell)^c_q$ code with $r$ codewords.
\end{proof}

Can the scalar linear solution and the vector solution be compared only on the basis of
Theorems~\ref{thm:gen_RiAh} and~\ref{thm:gen_RiAh_Vec}?
Assume we are given the $(\epsilon,k)$-$\mathcal{N}_{h,r,\alpha k +\epsilon}$ network.
A scalar linear solution over $\F_{q^\ell}$ requires by Theorem~\ref{thm:gen_RiAh} an
$\alpha$-$(h,k,h-k-\epsilon)^c_{q^\ell}$ code with $r$ codewords. The related vector network coding with vectors
of length $\ell$ over $\F_q$ requires an $\alpha$-$(h \ell,k \ell,h\ell-k\ell-\epsilon\ell)^c_q$ code with $r$ codewords.
One can construct the vector network code from the scalar linear network code by using companion matrices
and their powers~\cite{EbFr11,EtWa15,EtWa16}. But, the important question is whether we can find a Grassmannian
code for the vector network coding larger than the largest one for scalar linear network coding.
Some examples of such codes are given in~\cite{EtWa15,EtWa16} and other codes are a subject for further research.

\section{Upper Bounds on the Sizes of Codes}
\label{sec:bounds}

In this section we will present some upper bounds on the sizes of codes.
Upper bounds on the sizes of codes with no repeated codewords are considered first
and later a short discussion is given for codes with repeated codewords.

\subsection{Bounds on Sizes of Codes with no Repeated Codewords}
\label{sec:bounds_no_repeat}

Clearly, there is a huge ground for research since the parameters
of the codes are in a very large range and our knowledge is very limited.
We will give some ideas and some insight about the difficulty
of obtaining new bounds and especially the exact size of optimal codes.
The bounds are on $\cA_q (n,k,t;\lambda)$ and on $\cB_q(n,k,\delta;\alpha)$ and clearly
by Corollary~\ref{cor:equiv}(3) and~\ref{cor:equiv}(6), bounds are required
only on one of them since they are equivalent (when the related parameters are taken).
There is a duality between the two types of codes which were considered with the two dual distance measures.
Hence, it is sometimes simpler and more convenient to analyse or construct a large code
with one of the two distance measures. As mentioned before, the case of $\lambda=1$ was considered
in the last ten years, and the following simple equality
reduced the range for the search of such bounds.
\begin{theorem}[Lemma 13~\cite{EtVa11}]
\label{thm:lambda=1}
If $n$, $k$, $t$, are positive integers such that $1 \leq t <k <n$, then
$\cA_q(n,k,t;1) = \cA_q(n,n-k,n-2k+t;1)$.
\end{theorem}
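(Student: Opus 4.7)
The plan is to derive the equality by combining the orthogonal-complement duality already captured in Corollary~\ref{cor:equiv} with the elementary identifications provided by Theorems~\ref{thm:equ_dist} and~\ref{thm:equ_pack}. The underlying geometric idea is that both sides count the maximum number of codewords in a Grassmannian code with minimum Grassmannian distance $k-t+1$, one living in $\cG_q(n,k)$ and the other in $\cG_q(n,n-k)$, and the map $\C \mapsto \C^\perp$ supplies a dimension-preserving bijection between these two classes.

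Concretely, I would first apply Corollary~\ref{cor:equiv}(6) with $\lambda = 1$ to obtain
\[
\cA_q(n,k,t;1) \;=\; \cB_q(n,n-k,k-t+1;2).
\]
Next, by Theorem~\ref{thm:equ_dist} the right-hand side equals the maximum size of a Grassmannian code $\cD \in \cG_q(n,n-k)$ (without repeated codewords) whose minimum Grassmannian distance is $k-t+1$. Finally, by Theorem~\ref{thm:equ_pack} such a code $\cD$ is precisely a $t'\text{-}(n,n-k,1)^m_q$ code with $(n-k) - t' + 1 = k-t+1$, i.e., with $t' = n-2k+t$, and the maximum size of such a code is by definition $\cA_q(n,n-k,n-2k+t;1)$. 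The claimed equality then follows by chaining these three identifications.

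The only real obstacle, and it is purely bookkeeping, is to check that the parameters on the right-hand side lie in the feasible range, i.e., $1 \leq n-2k+t \leq n-k$. The upper bound is immediate from $t < k$, while the lower bound holds because a code in $\cG_q(n,k)$ with minimum Grassmannian distance $k-t+1$ requires $k-t+1 \leq \min(k,n-k)$, which forces $t \geq \max(1,\, 2k-n+1)$ and in particular $n-2k+t \geq 1$; when this range is empty the maximum is trivially one on both sides. Beyond this feasibility check, the argument is a three-step dictionary lookup between the covering viewpoint, the Grassmannian-distance viewpoint, and the multiplicity viewpoint, so no further computation is required.
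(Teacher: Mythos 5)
Your derivation is correct, and it is essentially the same argument that underlies the cited result: the paper states this theorem without proof (attributing it to Lemma 13 of~\cite{EtVa11}), and the standard proof is precisely the orthogonal-complement duality $\C \mapsto \C^\perp$, which you assemble from the paper's own Corollary~\ref{cor:equiv}(6) together with Theorems~\ref{thm:equ_dist} and~\ref{thm:equ_pack}. Your parameter-feasibility check (that $n-2k+t\geq 1$ whenever either side exceeds one codeword, with both sides trivially equal to one otherwise) is a welcome extra precaution rather than a deviation in approach.
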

%
%

Theorem~\ref{thm:lambda=1} implies that if $\lambda=1$ then it is enough to find
$\cA_q(n,k,t;1)$ for $k \leq n-k$. This is not the case when $\lambda >1$, where
an analysis will be considered in Section~\ref{sec:anal}. In this analysis
for $\lambda >1$ we will consider the case where $n \geq 2k$ as well as the case
where $n < 2k$.
In this section we will consider only the basic upper bounds.

We start by considering the duality between the two distance measures
and related different simple approaches to obtain
bounds on the maximum sizes of codes. To simplify and emphasize the
properties on which the bounds are analysed we summarize them. Our starting point will be
an $\alpha$-$(n,k,\delta)^c_q$ code.
\begin{itemize}
\item[(c.1)] In an $\alpha$-$(n,k,\delta)^c_q$ code $\C$, each subset of $\alpha$ codewords ($k$-subspaces) spans a
subspace of~$\F_q^n$ whose dimension is at least $\delta +k$.

\item[(c.2)] Each $(\delta +k-1)$-subspace of $\F_q^n$ contains
at most $\alpha -1$ codewords of $\C$ (by (c.1)).

\item[(c.3)] $\C^\perp$ is an ${(n-k-\delta+1)}$-${(n,n-k,\alpha -1)^m_q}$ code (see Corollary~\ref{cor:eq_perp1}(1)).
In such a code each $(n-\delta -k+1)$-subspace of $\F_q^n$ is contained in at most $\alpha-1$ codewords.

\item[(c.4)] Any $\alpha$ codewords from $\C^\perp$ intersect in a subspace whose dimension is at most $n-\delta-k$ (by (c.3)).
\end{itemize}
Bounds on the maximum size of related codes
can be obtained based on any one of these four observations and properties. Each one of these four
properties can give another direction to obtain related bounds.

If our starting point is a $t$-$(n,k,\lambda)^m_q$ code then the four dual properties are as follows.
\begin{itemize}
\item[(m.1)] In a $t$-$(n,k,\lambda)^m_q$ code $\C$, each $t$-subspace of $\F_q^n$ is contained in
at most $\lambda$ codewords.

\item[(m.2)] Any $\lambda +1$ codewords of $\C$ intersect in a subspace whose dimension
is at most $t-1$ (by (m.1)).

\item[(m.3)] $\C^\perp$ is a $(\lambda+1)$-$(n,n-k,k-t+1)^c_q$ code (see Corollary~\ref{cor:eq_perp2}(4)).
In such a code each subset of $\lambda+1$ codewords ($(n-k)$-subspaces) of $\F_q^n$ spans a subspace whose dimension is at least $(n-t+1)$.

\item[(m.4)] Each $(n-t)$-subspace of $\F_q^n$ contains at most $\lambda$ codewords of $\C^\perp$ (by (m.3)).
\end{itemize}

The classic bounds for the cases $\lambda =1$ (or $\alpha =2$, respectively) for a $t$-$(n,k,\lambda)^m_q$ code
(or an $\alpha$-$(n,k,\delta)^c_q$ code, respectively) can be easily generalized
for larger $\lambda$ ($\alpha$, respectively), where the simplest ones are the packing bound
and the Johnson bounds~\cite{EtVa11}. It might be easier to generalize these bounds when we consider
$t$-$(n,k,\lambda)^m_q$ codes and two proofs based on two of the four given properties can be given.
The following well-known lemma will be used frequently in our results.

\begin{lemma}
\label{lem:t_k_complete}
A $t$-subspace of $\F_q^n$ is contained in $\sbinomq{n-t}{k-t}$ distinct $k$-subspaces
of $\F_q^n$.
\end{lemma}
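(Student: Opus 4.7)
The plan is to give a short counting argument, and the cleanest route is via the quotient space. First, I would fix a $t$-subspace $T$ of $\F_q^n$ and set up the canonical projection ${\pi : \F_q^n \to \F_q^n / T}$, noting that $\dim (\F_q^n/T) = n-t$. Then I would observe that the map $K \mapsto \pi(K) = K/T$ provides a bijection between the set of $k$-subspaces $K \subseteq \F_q^n$ with $T \subseteq K$ and the set of $(k-t)$-subspaces of $\F_q^n/T$. Injectivity follows because $K = \pi^{-1}(K/T)$, and surjectivity follows by taking preimages of any $(k-t)$-subspace of $\F_q^n/T$. Counting the $(k-t)$-subspaces of an ${(n-t)}$-dimensional space over $\F_q$ directly gives $\sbinomq{n-t}{k-t}$, which is the claim.

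As a backup, I would record the standard double-counting proof, which does not require invoking quotient spaces. One counts pairs $(T,K)$ with $T$ a $t$-subspace, $K$ a $k$-subspace, and $T \subseteq K$. Counting by $T$ first (and using that by the transitive action of $GL_n(\F_q)$ on $t$-subspaces the number $x$ of $k$-subspaces containing a given $t$-subspace is independent of $T$) gives $\sbinomq{n}{t} \cdot x$, while counting by $K$ first gives $\sbinomq{n}{k} \sbinomq{k}{t}$. Solving for $x$ and simplifying using the explicit product formula for the $q$-binomial coefficients collapses to $\sbinomq{n-t}{k-t}$.

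Neither argument has a genuine obstacle; the only thing to be careful about is justifying that the quotient map really does restrict to a bijection on subspaces containing $T$, which is immediate from the lattice isomorphism theorem for subspaces of $\F_q^n$. Since the result is well known and stated for later reference, a one-line presentation via the quotient bijection is the most economical and is what I would include.
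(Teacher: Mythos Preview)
Your proposal is correct; both the quotient-space bijection and the double-counting argument are standard and valid proofs of this fact. The paper itself does not supply a proof at all: it introduces the lemma as ``well-known'' and states it without justification, so there is no ``paper's proof'' to compare against. Your quotient argument is indeed the cleanest one-line justification and would be an appropriate insertion if a proof were desired.
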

\begin{remark}
Note, that by Lemma~\ref{lem:t_k_complete} we have that if $\lambda = \sbinomq{n-t}{k-t}$ then
$\cA_q (n,k,t;\lambda) = \sbinomq{n}{k}$. Hence, the largest $\lambda$ that we should consider
is $\sbinomq{n-t}{k-t}$.
\end{remark}
The first few results are the $q$-analog of the packing bound.
\begin{theorem}
\label{thm:sphere}
If $n$, $k$, $t$, and $\lambda$ are positive integers such that $1 \leq t <k <n$ and $1 \leq \lambda \leq \sbinomq{n-t}{k-t}$,
then
$$
\cA_q (n,k,t;\lambda) \leq \left\lfloor \lambda \frac{\sbinomq{n}{t}}{\sbinomq{k}{t}}  \right\rfloor ~.
$$
\end{theorem}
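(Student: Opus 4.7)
The plan is a straightforward double-counting argument, which is the standard $q$-analog of the classical packing (Singleton/Johnson) bound. Let $\C$ be any $t$-$(n,k,\lambda)^m_q$ code in $\cG_q(n,k)$, and consider the set of incidence pairs
$$
\cI = \mathset{(T,K) : K \in \C,~ T \in \cG_q(n,t),~ T \subseteq K}.
$$
I would count $|\cI|$ in two ways.

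First, fixing $K \in \C$ and counting the $t$-subspaces $T$ of $K$: since $K$ is a $k$-dimensional subspace, the number of $t$-subspaces it contains is $\sbinomq{k}{t}$, by the same identity used in Lemma~\ref{lem:t_k_complete} applied inside $K$. Summing over the $|\C|$ codewords gives $|\cI| = |\C| \cdot \sbinomq{k}{t}$. Second, fixing a $t$-subspace $T$ of $\F_q^n$ and counting codewords $K \in \C$ with $T \subseteq K$: by property (m.1), this count is at most $\lambda$ for each of the $\sbinomq{n}{t}$ choices of $T$, so $|\cI| \leq \lambda \cdot \sbinomq{n}{t}$. Combining the two estimates yields
$$
|\C| \cdot \sbinomq{k}{t} \;\leq\; \lambda \cdot \sbinomq{n}{t},
$$
hence $|\C| \leq \lambda \sbinomq{n}{t} / \sbinomq{k}{t}$, and since $|\C|$ is an integer one may take the floor.

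There is essentially no obstacle here: the only thing to be careful about is that the bound is stated for $\cA_q(n,k,t;\lambda)$ (codes without repeated codewords), but the argument does not actually require this assumption and works equally well for $\tilde{\cA}_q(n,k,t;\lambda)$, since each distinct incidence pair is counted once on each side regardless of multiplicities of codewords; the only substantive input is the definition itself. The hypothesis $1 \leq \lambda \leq \sbinomq{n-t}{k-t}$ only ensures the bound is meaningful (otherwise, by the remark after Lemma~\ref{lem:t_k_complete}, the trivial bound $\sbinomq{n}{k}$ is already forced). So the full proof should be just a few lines of double counting followed by rearrangement and the floor step.
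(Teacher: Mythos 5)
Your double-counting argument is exactly the paper's proof: it counts pairs $(T,K)$ with $T$ a $t$-subspace contained in a codeword $K$, bounding the count by $\lambda\sbinomq{n}{t}$ from one side and expressing it as $|\C|\sbinomq{k}{t}$ from the other. Your added remark that the argument extends verbatim to $\tilde{\cA}_q(n,k,t;\lambda)$ also matches the paper's later observation in Section~\ref{sec:bounds_repeat}.
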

\begin{proof}
Let $\C$ be a $t$-$(n,k,\lambda)^m_q$ code. There are $\sbinomq{n}{t}$ distinct $t$-subspaces
of $\F_q^n$ and hence
by (m.1) there are at most $\lambda \sbinomq{n}{t}$ pairs $(X,Y)$, where
$X$ is a $t$-subspace and $Y$ is a codeword in~$\C$ which contains $X$.
The number of $t$-subspaces in a codeword (a $k$-subspace) is $\sbinomq{k}{t}$,
and hence the theorem follows.
\end{proof}
\begin{remark}
Assume that $\alpha = \sbinomq{k+\delta-1}{k} +1$ and we consider $\C$ to be the Grassmannian
code which contains all the $\sbinomq{n}{k}$ subspaces of $\cG_q(n,k)$. If we take all the
$k$-subspaces in a $(k+\delta-1)$-subspace $X$ of $\F_q^n$ then they span only $X$, but with one
more subspace a $(k+\delta)$-subspace will be spanned.
Hence, the largest $\alpha$ that we should consider
is $\sbinomq{k+\delta-1}{k} +1$.
\end{remark}
\begin{corollary}
\label{cor:sphere}
If $n$, $k$, $\delta$, and $\alpha$ are positive integers such that $1 <k <n$,
$1 \leq \delta \leq n-k$ and $2 \leq \alpha \leq \sbinomq{k+\delta-1}{k}+1$, then
$$
\cB_q (n,k,\delta;\alpha) \leq \left\lfloor (\alpha -1) \frac{\sbinomq{n}{\delta +k-1}}{\sbinomq{n-k}{\delta -1}}  \right\rfloor ~.
$$
\end{corollary}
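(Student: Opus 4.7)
The plan is to derive this corollary directly from Theorem~\ref{thm:sphere} via the duality established in Corollary~\ref{cor:equiv}(3). First I would invoke that identity to obtain
$$\cB_q(n,k,\delta;\alpha) \;=\; \cA_q(n,\, n-k,\, n-k-\delta+1;\, \alpha-1),$$
so that bounding the right-hand side via Theorem~\ref{thm:sphere} will suffice. Applying that theorem with $(n,\, n-k,\, n-k-\delta+1,\, \alpha-1)$ in place of $(n, k, t, \lambda)$ gives
$$\cB_q(n,k,\delta;\alpha) \;\leq\; \left\lfloor (\alpha-1)\,\frac{\sbinomq{n}{n-k-\delta+1}}{\sbinomq{n-k}{n-k-\delta+1}} \right\rfloor,$$
and the target form then follows immediately from the symmetry $\sbinomq{m}{r} = \sbinomq{m}{m-r}$, since $\sbinomq{n}{n-k-\delta+1} = \sbinomq{n}{k+\delta-1}$ and $\sbinomq{n-k}{n-k-\delta+1} = \sbinomq{n-k}{\delta-1}$.

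The only substantive thing to check is that the hypotheses of Theorem~\ref{thm:sphere} are satisfied under this substitution. Writing $t = n-k-\delta+1$, the bound on $\lambda$ in Theorem~\ref{thm:sphere} becomes $\alpha - 1 \leq \sbinomq{n-t}{(n-k)-t} = \sbinomq{k+\delta-1}{\delta-1}$, which by the same Gaussian-coefficient symmetry equals $\sbinomq{k+\delta-1}{k}$, matching exactly the stated upper bound on $\alpha$. The inequalities $1 \leq t < n-k < n$ reduce to $1 \leq \delta \leq n-k$ and $k < n$, which are given; the boundary case $\delta = 1$ (where $t = n-k$ and Theorem~\ref{thm:sphere}'s condition $t < k$ may fail when $n \geq 2k$) has to be handled separately, but there $\cB_q(n,k,1;\alpha) \leq \sbinomq{n}{k}$ trivially since a code with minimum $\alpha$-Grassmannian distance $\geq 1$ and no repeated codewords is just a subset of~$\cG_q(n,k)$, and this trivial bound is dominated by the claimed expression.

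The main obstacle is really just careful bookkeeping of the parameter translation and two applications of Gaussian-coefficient symmetry; no new combinatorial input is needed once Corollary~\ref{cor:equiv}(3) and Theorem~\ref{thm:sphere} are in hand. Conceptually, the corollary is simply the packing bound on $t$-$(n,k,\lambda)^m_q$ codes read through the duality that swaps $k \leftrightarrow n-k$ and converts the covering parameter $\delta$ into the packing parameter $t = n-k-\delta+1$.
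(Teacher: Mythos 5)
Your derivation is correct and is exactly the route the paper intends: Corollary~\ref{cor:sphere} is stated without a separate proof precisely because it follows from Theorem~\ref{thm:sphere} through the duality $\cB_q(n,k,\delta;\alpha)=\cA_q(n,n-k,n-k-\delta+1;\alpha-1)$ of Corollary~\ref{cor:equiv}(3) together with the symmetry $\sbinomq{m}{r}=\sbinomq{m}{m-r}$, and your parameter bookkeeping (including the check that $\alpha-1\leq\sbinomq{k+\delta-1}{k}$ matches the theorem's range for $\lambda$) is accurate. The only tiny slip is in your parenthetical on the boundary case: under the substitution the hypothesis that fails at $\delta=1$ is $t<n-k$ (which fails for every $n$, not only $n\geq 2k$), but since you handle $\delta=1$ separately via the trivial bound $\cB_q(n,k,1;\alpha)\leq\sbinomq{n}{k}\leq(\alpha-1)\sbinomq{n}{k}$, the argument is unaffected.
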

The rest of this section is devoted to the $q$-analog of the Johnson bounds~\cite{EtVa11,XiFu09}.
We should remark that there is an improvement of the Johnson bound for
Grassmannian codes~\cite{KiKu17}. The same improvement can be applied also
for Grassmannian codes with the new distance measures~\cite{EKOO18}.
We omit this improvement since we concentrate in this paper only on the basic bounds.
Also the improvement is for some parameters, it is not dramatic improvement, and far from being simple.
\begin{theorem}
\label{thm:Johnson1}
If $n$, $k$, $t$, and $\lambda$ are positive integers such that $1 \leq t <k <n$ and $1 \leq \lambda \leq \sbinomq{n-t}{k-t}$,
then
$$
\cA_q (n,k,t;\lambda) \leq \left\lfloor \frac{q^n -1}{q^k-1} \cA_q (n-1,k-1,t-1;\lambda)  \right\rfloor ~.
$$
\end{theorem}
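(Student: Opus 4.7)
My plan is to adapt the classical Johnson-bound argument for Grassmannian codes (the case $\lambda=1$) by double counting incident pairs of the form (one-dimensional subspace, codeword), where the main device is the \emph{derived code} at a point.

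First, I fix a one-dimensional subspace $P$ of $\F_q^n$ and define the derived code
$$\C_P \deff \{ X/P \,:\, X \in \C,\ P \subseteq X \} \subseteq \cG_q(n-1,k-1),$$
where I identify $\F_q^n/P$ with $\F_q^{n-1}$. I claim $\C_P$ is a $(t-1)$-$(n-1,k-1,\lambda)^m_q$ code with no repeated codewords. For the multiplicity condition, pick any $(t-1)$-subspace $W'$ of $\F_q^n/P$; by the correspondence theorem it lifts uniquely to a $t$-subspace $W$ of $\F_q^n$ containing $P$, and the codewords $X/P \in \C_P$ containing $W'$ are in bijection with the codewords $X \in \C$ containing $W$. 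Property (m.1) for $\C$ gives at most $\lambda$ such $X$, establishing the claim. Distinctness is immediate because $X_1/P = X_2/P$ forces $X_1 = X_2$. Hence $|\C_P| \leq \cA_q(n-1,k-1,t-1;\lambda)$.

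Next, I count in two ways the number $N$ of incident pairs
$$N = |\{ (P,X) \,:\, X \in \C,\ P \subseteq X,\ \dim P = 1 \}|.$$
Summing over codewords, each $k$-subspace contains $\sbinomq{k}{1} = \frac{q^k-1}{q-1}$ one-dimensional subspaces, so $N = |\C|\,\frac{q^k-1}{q-1}$. Summing over points, $N = \sum_P |\C_P| \leq \sbinomq{n}{1} \cdot \cA_q(n-1,k-1,t-1;\lambda) = \frac{q^n-1}{q-1}\,\cA_q(n-1,k-1,t-1;\lambda)$. Equating the two expressions and solving for $|\C|$ yields
$$|\C| \leq \frac{q^n-1}{q^k-1}\,\cA_q(n-1,k-1,t-1;\lambda),$$
and since $|\C|$ is an integer, the floor on the right-hand side is justified.

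There is essentially no hard step here: the whole argument is a routine lifting of the standard Johnson-type double count, and the only thing to watch is the lattice-correspondence identification that sends $(t-1)$-subspaces of $\F_q^n/P$ to $t$-subspaces of $\F_q^n$ through $P$, so that the $\lambda$-bound transfers cleanly. One might worry about whether the derivation could produce repeated codewords and thereby force us to bound $\tilde{\cA}_q$ instead of $\cA_q$ in the intermediate step, but as noted this cannot happen because the quotient map $X \mapsto X/P$ is injective on subspaces containing $P$; so the argument is valid for the no-repetition quantity $\cA_q$ throughout.
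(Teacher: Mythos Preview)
Your proof is correct and follows essentially the same Johnson-type derivation argument as the paper. The only cosmetic differences are that the paper realizes the derived code as $\{C\cap Y : C\in\C,\ X\subset C\}$ for a complementary hyperplane $Y$ rather than as the quotient $\{C/P : P\subseteq C\}$, and it uses a pigeonhole/averaging step to locate one good point $X$ instead of your full double count over all one-dimensional $P$; both variants are standard and yield the identical inequality.
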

\begin{proof}
Let $\C$ be a $t$-$(n,k,\lambda)^m_q$ code for which $|\C|=M=\cA_q (n,k,t;\lambda)$.
Each codeword of $\C$ contains $\frac{q^k-1}{q-1}$ one-dimensional subspaces of $\F_q^n$.
Since the number of one-dimensional subspaces of $\F_q^n$ is $\frac{q^n-1}{q-1}$, it follows
that there exists a one-dimensional subspace $X$ of $\F_q^n$ which is contained in at least
$\frac{q^k-1}{q^n-1} M$ codewords of $\C$.
Let $Y$ be an $(n-1)$-subspace of $\F_q^n$ such that
$\F_q^n =Y+X$. Define $\C' \deff \{ C \cap Y ~:~ C \in \C,~ X \subset C \}$. Clearly, $\C'$ is
a subspace code with $(k-1)$-subspaces of $Y$. Note, that for two distinct codewords $C_1,C_2 \in \C$
such that $X \subset C_1$ and $X \subset C_2$ we have $C'_1 = C_1 \cap Y \neq C_2 \cap Y =C'_2$ since
$C_1 = C'_1 +X$ and $C_2 = C'_2 +X$. Therefore, the size of~$\C'$ is at least $\frac{q^k-1}{q^n-1} M$.

Let $Z'$ be a $(t-1)$-subspace of $Y$.
$Z =Z'+X$ is a $t$-subspace of $\F_q^n$ and hence it is a subspace of at most $\lambda$ codewords
of $\C$. Let $C_1 , C_2 ,\ldots , C_s$, $s \leq \lambda$, be the only $s$ distinct codewords of $\C$ which contain $Z$.
Since $Z$ contains $X$ it follows that $C'_i = C_i \cap Y$, $1 \leq i \leq s$, are distinct codewords in $\C'$
which contain $Z'$ (note that $Z' = Z \cap Y$). If there exists another codeword $C' \in \C'$ such that
$Z' \subset C'$ then $C = C' +X$ is a codeword of~$\C$ which contains $Z$, a contradiction to the
fact that only $C_1 , C_2 ,\ldots , C_s$ are the codewords of $\C$ which contain $Z$. Hence, each
$(t-1)$-subspace of $Y$ is contained in at most $s \leq \lambda$ codewords of $\C'$.

Therefore, $\C'$ is a $(t-1)$-$(n-1,k-1,\lambda)^m_q$ code whose size is at least $\frac{q^k-1}{q^n-1} M$.
Hence, $\frac{q^k-1}{q^n-1} \cA_q (n,k,t;\lambda) \leq |\C'| \leq \cA_q (n-1,k-1,t-1;\lambda)$, and thus,
$$
\cA_q (n,k,t;\lambda) \leq \left\lfloor \frac{q^n -1}{q^k-1} \cA_q (n-1,k-1,t-1;\lambda)  \right\rfloor ~.
$$
\end{proof}
\begin{corollary}
\label{cor:Johnson1}
If $n$, $k$, $\delta$, and $\alpha$ are positive integers such that $1 <k <n$,
$1 \leq \delta \leq n-k$ and $2 \leq \alpha \leq \sbinomq{k+\delta-1}{k}+1$, then
$$
\cB_q (n,k,\delta;\alpha) \leq \left\lfloor \frac{q^n -1}{q^{n-k}-1} \cB_q (n-1,k,\delta;\alpha)  \right\rfloor ~.
$$
\end{corollary}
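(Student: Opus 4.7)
The plan is to mimic the proof of Theorem~\ref{thm:Johnson1} directly at the level of covering Grassmannian codes, replacing averaging over $1$-subspaces (with passage to a quotient) by the dual operation: averaging over hyperplanes and restricting to the chosen hyperplane. This mirrors the duality between the two distance measures listed in (c.1)--(c.4) and (m.1)--(m.4), and produces the covering version of the Johnson bound with minimal fuss.

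Concretely, let $\C$ be an $\alpha$-$(n,k,\delta)^c_q$ code of maximum size $M=\cB_q(n,k,\delta;\alpha)$, and count incidence pairs $(H,C)$ in which $H$ is a hyperplane of $\F_q^n$ and $C\in\C$ satisfies $C\subseteq H$. Each codeword lies in exactly $\sbinomq{n-k}{1}=\frac{q^{n-k}-1}{q-1}$ hyperplanes, while $\F_q^n$ contains $\sbinomq{n}{1}=\frac{q^n-1}{q-1}$ hyperplanes in total; so a standard averaging argument produces a hyperplane $H^*$ that contains at least $M\cdot\frac{q^{n-k}-1}{q^n-1}$ codewords. Set $\C''\deff\{C\in\C:C\subseteq H^*\}$, a collection of distinct $k$-subspaces of $H^*\cong\F_q^{n-1}$. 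By property (c.1) applied inside $\C$, any $\alpha$ codewords of $\C''$ span a subspace of dimension at least $k+\delta$ in $\F_q^n$; because all these codewords already lie in $H^*$, that span is contained in $H^*$ and has the same dimension when computed inside $H^*$. Hence $\C''$ is an $\alpha$-$(n-1,k,\delta)^c_q$ code, yielding $M\cdot\frac{q^{n-k}-1}{q^n-1}\leq |\C''|\leq\cB_q(n-1,k,\delta;\alpha)$, and integrality of $M$ allows passage to the floor.

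A shorter alternative is to invoke Theorem~\ref{thm:Johnson1} directly, via the duality Corollary~\ref{cor:equiv}(3), with $(t,\lambda)=(n-k-\delta+1,\alpha-1)$; this translation is valid whenever $\delta\geq 2$ so that the hypothesis $t<n-k$ of Theorem~\ref{thm:Johnson1} holds, and the boundary case $\delta=1$ is trivial because any set of distinct $k$-subspaces is automatically an $\alpha$-$(n,k,1)^c_q$ code, making both sides of the claimed inequality equal to $\sbinomq{n}{k}$. I do not foresee a real obstacle; the only subtlety worth checking is that $\C''$ retains its $\alpha$-covering property in the smaller ambient space $H^*$, which is automatic since the linear span of subspaces lying inside $H^*$ is computed identically in $H^*$ and in $\F_q^n$.
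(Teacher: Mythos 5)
Your proposal is correct, and it actually contains the paper's (implicit) proof as its ``shorter alternative'': the corollary is stated without proof immediately after Theorem~\ref{thm:Johnson1}, the intended derivation being exactly the duality translation $\cB_q(n,k,\delta;\alpha)=\cA_q(n,n-k,n-k-\delta+1;\alpha-1)$ from Corollary~\ref{cor:equiv}, under which Theorem~\ref{thm:Johnson1} with parameters $(n,n-k,n-k-\delta+1;\alpha-1)$ becomes precisely the stated bound, and the bound $\alpha-1\leq\sbinomq{k+\delta-1}{k}$ is the translate of $\lambda\leq\sbinomq{n-t}{k-t}$. Your observation that this translation needs $\delta\geq 2$ for the hypothesis $t<k$ of Theorem~\ref{thm:Johnson1}, and your separate (and correct) disposal of $\delta=1$ where both sides equal $\sbinomq{n}{k}$, is a point the paper glosses over. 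Your primary argument --- averaging over hyperplanes and restricting to the one containing at least the average number of codewords --- is a genuinely self-contained proof in the covering world; it is the orthogonal-complement mirror of the quotient-by-a-line argument in Theorem~\ref{thm:Johnson1}, and in form it coincides with the proof the paper gives for Theorem~\ref{thm:Johnson2} (restriction to $\C_Y=\{C\in\C: C\subseteq Y\}$). What the direct route buys is that one never leaves the covering setting and never needs the $\delta=1$ case split, since the key fact --- that the span of $\alpha$ codewords lying inside $H^*$ has the same dimension whether computed in $H^*$ or in $\F_q^n$ --- holds for all $\delta$; what the duality route buys is brevity. Both are sound.
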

\begin{theorem}
\label{thm:Johnson2}
If $n$, $k$, $t$, and $\lambda$ are positive integers such that $1 \leq t <k <n$ and $1 \leq \lambda \leq \sbinomq{n-1-t}{k-t}$,
then
$$
\cA_q (n,k,t;\lambda) \leq \left\lfloor \frac{q^n -1}{q^{n-k}-1} \cA_q (n-1,k,t;\lambda)  \right\rfloor ~.
$$
\end{theorem}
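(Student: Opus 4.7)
\medskip

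The plan is to prove Theorem~\ref{thm:Johnson2} by a dual averaging argument to the one used in Theorem~\ref{thm:Johnson1}: instead of pivoting on a $1$-dimensional subspace contained in many codewords, I will pivot on a hyperplane (an $(n-1)$-subspace) that \emph{contains} many codewords.

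More concretely, let $\C$ be a $t$-$(n,k,\lambda)^m_q$ code with $|\C| = M = \cA_q(n,k,t;\lambda)$. First I would double-count the set of incidences
$$
\cI \deff \{(H,C) ~:~ H \text{ is an } (n-1)\text{-subspace of } \F_q^n,~ C \in \C,~ C \subseteq H\}.
$$
By Lemma~\ref{lem:t_k_complete} (applied to the pair $(k, n-1)$ inside $\F_q^n$), each fixed $k$-subspace $C \in \C$ is contained in exactly $\sbinomq{n-k}{1} = \frac{q^{n-k}-1}{q-1}$ hyperplanes of $\F_q^n$, giving $|\cI| = M \cdot \frac{q^{n-k}-1}{q-1}$. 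On the other hand, the number of hyperplanes of $\F_q^n$ is $\sbinomq{n}{1} = \frac{q^n-1}{q-1}$, so by averaging there exists a hyperplane $H^*$ with
$$
|\C_{H^*}| \geq \frac{q^{n-k}-1}{q^n-1}\, M, \qquad \text{where } \C_{H^*} \deff \{C \in \C ~:~ C \subseteq H^*\}.
$$

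Next I would argue that $\C_{H^*}$, viewed as a collection of $k$-subspaces of the ambient space $H^* \cong \F_q^{n-1}$, is itself a $t$-$(n-1,k,\lambda)^m_q$ code with no repeated codewords. The no-repetition part is inherited from $\C$. For the multiplicity property, any $t$-subspace $Z \subseteq H^*$ is in particular a $t$-subspace of $\F_q^n$, and is therefore contained in at most $\lambda$ codewords of $\C$; the codewords of $\C$ that contain $Z$ and lie inside $H^*$ form a subset of this set, so $Z$ is contained in at most $\lambda$ codewords of $\C_{H^*}$. Hence $|\C_{H^*}| \leq \cA_q(n-1,k,t;\lambda)$, and combining with the averaging bound yields
$$
M \leq \frac{q^n-1}{q^{n-k}-1}\, \cA_q(n-1,k,t;\lambda),
$$
which gives the claimed floor bound since $M$ is an integer.

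There is essentially no hard step here; the only point to watch is the range condition $\lambda \leq \sbinomq{n-1-t}{k-t}$, which guarantees that $\cA_q(n-1,k,t;\lambda)$ sits in the nondegenerate regime (i.e., the trivial upper bound $\sbinomq{n-1}{k}$ has not already been reached), consistent with the remark following Lemma~\ref{lem:t_k_complete}. The structural symmetry between this proof and that of Theorem~\ref{thm:Johnson1} simply reflects the duality between properties (m.1) and (m.4), and in fact one could alternatively derive Theorem~\ref{thm:Johnson2} by applying Theorem~\ref{thm:Johnson1} to $\C^\perp$ via Corollary~\ref{cor:equiv}; the direct hyperplane-averaging proof above is cleaner to state.
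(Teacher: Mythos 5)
Your proposal is correct and follows essentially the same argument as the paper: averaging the incidences between codewords and the $\frac{q^{n-k}-1}{q-1}$ hyperplanes containing each of them over the $\frac{q^n-1}{q-1}$ hyperplanes of $\F_q^n$, then observing that the restriction of $\C$ to a densest hyperplane is itself a $t$-$(n-1,k,\lambda)^m_q$ code. The only difference is presentational (you spell out the double-counting and the multiplicity inheritance, which the paper leaves as "clearly").
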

\begin{proof}
Let $\C$ be a $t$-$(n,k,\lambda)^m_q$ code for which $|\C|=M=\cA_q (n,k,t;\lambda)$.
For each $(n-1)$-subspace $Y$ let $\C_Y \deff \{ C ~:~ C \in \C, ~ C \subset Y \}$.
Clearly, $\C_Y$ is a $t\text{-}(n-1,k,\lambda)^m_q$ code. By Lemma~\ref{lem:t_k_complete}, each $k$-subspace of~$\F_q^n$ is contained
in $\frac{q^{n-k}-1}{q-1}$ subspaces of $\cG_q(n,n-1)$. There exist $\frac{q^n-1}{q-1}$ subspaces
in $\cG_q(n,n-1)$ and hence there exists one $(n-1)$-subspace $Z$ such that
$$
|\C_Z| \geq \frac{q^{n-k}-1}{q^n-1} M~.
$$
Since also $\C_Z$ is a $t$-$(n-1,k,\lambda)^m_q$ code (see the definition of $\C_Y$),
it follows that $\frac{q^{n-k}-1}{q^n-1} \cA_q (n,k,t;\lambda) \leq |\C_Z| \leq \cA_q (n-1,k,t;\lambda)$, and thus,
$$
\cA_q (n,k,t;\lambda) \leq \left\lfloor \frac{q^n -1}{q^{n-k}-1} \cA_q (n-1,k,t;\lambda)  \right\rfloor ~.
$$
\end{proof}
\begin{corollary}
\label{cor:Johnson2}
If $n$, $k$, $\delta$, and $\alpha$ are positive integers such that $1 <k <n$,
$1 \leq \delta \leq n-k$ and $2 \leq \alpha \leq \sbinomq{k+\delta-2}{k-1}+1$, then
$$
\cB_q (n,k,\delta;\alpha) \leq \left\lfloor \frac{q^n -1}{q^k-1} \cB_q (n-1,k-1,\delta;\alpha)  \right\rfloor ~.
$$
\end{corollary}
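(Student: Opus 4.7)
The natural plan is to prove Corollary~\ref{cor:Johnson2} as a direct dual of Theorem~\ref{thm:Johnson2} via the orthogonal-complement equivalence established in Corollary~\ref{cor:equiv}. Concretely, I will pass from $\cB_q(n,k,\delta;\alpha)$ to $\cA_q$ of the complementary parameters, apply Theorem~\ref{thm:Johnson2} there, and then pass back.

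First, by Corollary~\ref{cor:equiv}(3),
$$
\cB_q(n,k,\delta;\alpha) \;=\; \cA_q(n,\,n-k,\,n-k-\delta+1,\,\alpha-1),
$$
with the substitution $k' = n-k$, $t' = n-k-\delta+1$, $\lambda' = \alpha-1$. The second step is to apply Theorem~\ref{thm:Johnson2} to the right-hand side. This gives
$$
\cA_q(n,n-k,n-k-\delta+1;\alpha-1) \;\leq\; \left\lfloor \frac{q^n-1}{q^{n-(n-k)}-1}\, \cA_q(n-1,n-k,n-k-\delta+1;\alpha-1) \right\rfloor,
$$
and the denominator collapses to $q^k-1$, producing the factor that appears in the statement. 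The third step is to invoke Corollary~\ref{cor:equiv}(6) in the reverse direction on the inner $\cA_q$:
$$
\cA_q(n-1,\,n-k,\,n-k-\delta+1;\,\alpha-1) \;=\; \cB_q(n-1,\,k-1,\,\delta;\,\alpha),
$$
which is exactly what appears on the right-hand side of the target inequality. Chaining the three steps yields the claim.

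The only nontrivial bookkeeping is checking that the hypotheses of Theorem~\ref{thm:Johnson2} are met by the substituted parameters, and in particular that the stated range $2 \leq \alpha \leq \sbinomq{k+\delta-2}{k-1}+1$ is precisely what emerges. The ordering $1 \leq t' < k' < n$ becomes $1 \leq n-k-\delta+1 < n-k < n$, which follows from $1 < k < n$ and $1 \leq \delta \leq n-k$ (with the borderline $\delta = 1$ handled separately as the trivial case where the code can contain every $k$-subspace). The $\lambda$-condition $1 \leq \lambda' \leq \sbinomq{n-1-t'}{k'-t'}$ becomes
$$
1 \;\leq\; \alpha-1 \;\leq\; \sbinomq{n-1-(n-k-\delta+1)}{(n-k)-(n-k-\delta+1)} \;=\; \sbinomq{k+\delta-2}{\delta-1} \;=\; \sbinomq{k+\delta-2}{k-1},
$$
using the symmetry $\sbinomq{a}{b} = \sbinomq{a}{a-b}$ of the $q$-binomial coefficient. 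This matches the hypothesis verbatim.

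No genuine obstacle should arise: the entire argument is a mechanical parameter translation once the duality in Corollary~\ref{cor:equiv} is in hand. The only place where care is needed is the verification of the parameter range, especially keeping track of the shift $\lambda' = \alpha-1$ and the identity $\sbinomq{k+\delta-2}{\delta-1} = \sbinomq{k+\delta-2}{k-1}$, so that the admissible range for $\alpha$ in the corollary exactly mirrors the admissible range for $\lambda$ in Theorem~\ref{thm:Johnson2}.
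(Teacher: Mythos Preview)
Your proposal is correct and is exactly the argument the paper intends: Corollary~\ref{cor:Johnson2} is stated immediately after Theorem~\ref{thm:Johnson2} with no separate proof, and the implicit derivation is precisely the duality passage $\cB_q \to \cA_q$ via Corollary~\ref{cor:equiv}(3), application of Theorem~\ref{thm:Johnson2}, and the passage back via Corollary~\ref{cor:equiv}(6), together with the parameter-range check you carried out.
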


We would also like to remind and to mention that some bounds can be obtained from known results
on arcs and caps in projective geometry~\cite{HiSt01}. Discussion on these is
given in~\cite{EKOO18}.

\subsection{Bounds on Sizes of Codes with Repeated Codewords}
\label{sec:bounds_repeat}

In this subsection we will discuss the differences between codes with repeated codewords on those
with no repeated codewords. This will be done with respect to the sizes of codes.
We note that usually the difference in the size of the codes is minor and in most cases this difference
can be ignored. Some cases where this difference is not minor will be considered in this subsection.
Packing bounds for codes with repeated codewords are obtained exactly as in
Theorem~\ref{thm:sphere} and Corollary~\ref{cor:sphere}, respectively. Similar results can be obtained for the Johnson bounds.
We will omit these repetitions.
Although, the bounds seem to be the same there might be a small difference in the sizes of the codes,
as we have the following trivial results.
\begin{theorem}
\label{thm:diff_bound1}
If $n$, $k$, $t$, and $\lambda$ are positive integers such that $1 \leq t <k <n$ and $1 \leq \lambda \leq \sbinomq{n-t}{k-t}$,
then
$$
\tilde{\cA}_q (n,k,t;\lambda) \geq \cA_q (n,k,t;\lambda)~.
$$
\end{theorem}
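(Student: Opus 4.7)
The plan is to use the immediate observation that the class of $t$-$(n,k,\lambda)^m_q$ codes with no repeated codewords sits as a subclass inside the class of all $t$-$(n,k,\lambda)^m_q$ codes in which repetitions are allowed. Since $\tilde{\cA}_q(n,k,t;\lambda)$ is defined as the maximum size over the larger class and $\cA_q(n,k,t;\lambda)$ as the maximum over the smaller class, taking the maximum over a superset can only increase the value, so the inequality is immediate.

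Concretely, I would let $\C$ be any $t$-$(n,k,\lambda)^m_q$ code without repeated codewords attaining $|\C|=\cA_q(n,k,t;\lambda)$. The defining property (m.1) — that each $t$-subspace of $\F_q^n$ is contained in at most $\lambda$ codewords — is imposed identically on both types of codes, and distinctness of codewords plays no role in verifying it. Consequently $\C$ is a legitimate $t$-$(n,k,\lambda)^m_q$ code in the sense used to define $\tilde{\cA}_q$, so $\tilde{\cA}_q(n,k,t;\lambda) \geq |\C| = \cA_q(n,k,t;\lambda)$.

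There is no real obstacle here; as the text itself remarks, the inequality is a set-containment tautology and requires no combinatorial construction or counting. The hypotheses $1 \leq t < k < n$ and $1 \leq \lambda \leq \sbinomq{n-t}{k-t}$ only serve to guarantee that both quantities are well-defined and non-trivial (by the remark following Lemma~\ref{lem:t_k_complete}), so no additional feasibility check enters the argument. The substantive question, which belongs to the sequel rather than to this statement, is to identify the parameter regimes where the inequality is strict, analogous to Example~\ref{ex:need_repeat} for the dual quantities $\tilde{\cB}_q$ and $\cB_q$.
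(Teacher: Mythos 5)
Your proof is correct and matches the paper's (implicit) reasoning: the paper states this as a trivial result with no written proof, precisely because a code with no repeated codewords is automatically a valid code in the class where repetitions are permitted, so the maximum over the larger class dominates. Nothing more is needed.
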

\begin{corollary}
\label{cor:diff_bound2}
If $n$, $k$, $\delta$, and $\alpha$ are positive integers such that $1 <k <n$,
$1 \leq \delta \leq n-k$ and $2 \leq \alpha \leq \sbinomq{k+\delta-1}{k}+1$, then
$$
\tilde{\cB}_q (n,k,\delta;\alpha) \geq \cB_q (n,k,\delta;\alpha)~.
$$
\end{corollary}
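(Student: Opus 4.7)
The plan is to observe that this is essentially a one-line containment argument. The class of $\alpha$-$(n,k,\delta)^c_q$ codes with no repeated codewords is, trivially, a subclass of the class of $\alpha$-$(n,k,\delta)^c_q$ codes in which repetitions are permitted, since any set is a multiset in which every element happens to occur exactly once. Taking the maximum cardinality over the larger class therefore yields at least the maximum over the smaller one, which is exactly the stated inequality.

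Equivalently, one may deduce the corollary from Theorem~\ref{thm:diff_bound1} via the duality of Corollary~\ref{cor:equiv}. Setting $t = n-k-\delta+1$ and $\lambda = \alpha-1$ in Theorem~\ref{thm:diff_bound1} produces
\[
\tilde{\cA}_q(n,n-k,n-k-\delta+1;\alpha-1) \;\geq\; \cA_q(n,n-k,n-k-\delta+1;\alpha-1),
\]
and parts (2) and (3) of Corollary~\ref{cor:equiv} rewrite the two sides as $\tilde{\cB}_q(n,k,\delta;\alpha)$ and $\cB_q(n,k,\delta;\alpha)$ respectively. The only point to check is that the substituted parameters lie in the hypothesis range of Theorem~\ref{thm:diff_bound1}: the upper bound $\lambda \leq \sbinomq{n-t}{k-t}$ becomes $\alpha-1 \leq \sbinomq{k+\delta-1}{\delta-1}$, and by the $q$-binomial symmetry this equals $\sbinomq{k+\delta-1}{k}$, matching the hypothesis $\alpha \leq \sbinomq{k+\delta-1}{k}+1$; the remaining inequalities on $n$, $k$, and $\delta$ are routine.

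There is no genuine obstacle here: the inequality is, as the authors indicate, trivial. The substantive content of Subsection~\ref{sec:bounds_repeat} lies not in establishing the corollary itself but in quantifying \emph{when} the inequality is strict and by how much. Example~\ref{ex:need_repeat} already exhibits a case where the ratio $\tilde{\cB}_q(n,k,\delta;\alpha)/\cB_q(n,k,\delta;\alpha)$ is as large as $2$, so Corollary~\ref{cor:diff_bound2} is not vacuous while also failing, in general, to upgrade to an equality.
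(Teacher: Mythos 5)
Your proposal is correct and matches the paper's treatment: the paper states this as a "trivial result" with no written proof, and your one-line containment argument (a code with no repeated codewords is in particular a code in which repetitions are permitted, so the maximum over the larger class dominates) is exactly the intended justification. The alternative derivation via Theorem~\ref{thm:diff_bound1} and the duality of Corollary~\ref{cor:equiv} is also valid but adds nothing beyond the direct argument.
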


Finally, since we allow repeated codewords we have the following trivial lower bound.
\begin{theorem}
\label{thm:multiple_gamma}
If $n$, $k$, $t$, $\lambda$, and $\lambda'$ are positive integers such that $1 \leq t <k <n$,
then
$$
\tilde{\cA}_q (n,k,t;\lambda'\lambda) \geq \lambda' \tilde{\cA}_q (n,k,t;\lambda)~.
$$
\end{theorem}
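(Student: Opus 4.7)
The plan is to use a straightforward replication argument, exploiting the fact that $\tilde{\cA}_q$ permits repeated codewords. Concretely, I would start with a code $\C$ achieving the extremum $|\C| = \tilde{\cA}_q(n,k,t;\lambda)$, and manufacture a larger multi-set $\C'$ by taking each codeword of $\C$ with multiplicity $\lambda'$. The cardinality of the resulting multi-set is obviously $\lambda' \tilde{\cA}_q(n,k,t;\lambda)$, so the only thing to verify is that $\C'$ still obeys the defining property~(m.1) of a $t$-$(n,k,\lambda'\lambda)^m_q$ code.

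For this verification, fix an arbitrary $t$-subspace $X$ of $\F_q^n$. By property~(m.1) applied to $\C$, the number of codewords of $\C$ that contain $X$ (counted with multiplicity, since repetitions are allowed in the definition of $\tilde{\cA}_q$) is at most $\lambda$. Each such codeword contributes exactly $\lambda'$ copies to $\C'$, so the total number of codewords of $\C'$ containing $X$ is at most $\lambda' \cdot \lambda = \lambda'\lambda$. Since $X$ was arbitrary, $\C'$ is a $t$-$(n,k,\lambda'\lambda)^m_q$ code, and so
$$
\tilde{\cA}_q(n,k,t;\lambda'\lambda) \geq |\C'| = \lambda' \tilde{\cA}_q(n,k,t;\lambda).
$$

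There is essentially no obstacle here, and the author already flags the result as trivial; the only subtle point is making clear that we are working inside the multi-set framework of $\tilde{\cA}_q$ rather than that of $\cA_q$, for otherwise the replication step would fail immediately because the replicated words are identical. No counting, no duality, and no appeal to Corollary~\ref{cor:equiv} is needed; the bound follows purely from the definitions.
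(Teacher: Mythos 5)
Your replication argument is correct and is exactly the argument the paper has in mind; the paper states this result without proof, flagging it as a trivial consequence of allowing repeated codewords, and the duplication of each codeword $\lambda'$ times is the intended justification. Your added remark that the argument only works in the multi-set framework of $\tilde{\cA}_q$ (and not for $\cA_q$) is accurate and consistent with the paper's discussion.
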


One can immediately infer from Theorem~\ref{thm:multiple_gamma} that there are many examples like
Example~\ref{ex:need_repeat} for which $\tilde{\cB}_q (n,k,\delta;\alpha) > \cB_q (n,k,\delta;\alpha)$,
e.g., $(\alpha -1) \sbinomq{n}{k} = \tilde{\cB}_q (n,k,1;\alpha) > \cB_q (n,k,1;\alpha)= \sbinomq{n}{k}$.
To end this section we remind the reader that for the solution of the generalized combination
network repeated codewords are allowed. On the other hand, usually in coding theory and block design,
repeated codewords are not allowed, or more precisely, the related code or design with repeated codeword
is considered to be not simple and less interesting (in the literature bounds are considered only for codes with no
repeated codewords).

\section{Analysis of the Related Codes}
\label{sec:anal}

How good are the upper bounds given in Section~\ref{sec:bounds}? Can the bound of Theorem~\ref{thm:sphere} be attained? If $\lambda =1$ then
in view of Theorem~\ref{thm:lambda=1} we have to consider $\cA_q(n,k,t;1)$ only for $k \leq n-k$, where good constructions
and asymptotic bounds are known~\cite{BlEt11,EtSi13}. If $\lambda >1$ then the situation is quite different.
In this section we present a short analysis of the lower bounds compared to the upper bounds.
We do this analysis from a few point of view which are described in related subsections.
We do not intend to give any specific construction as this is a topic for another
research work (see~\cite{EKOO18}). Note, that most constructions for $\lambda=1$ can be generalized for
larger $\lambda$ if $k \leq n-k$. As for $\lambda=1$, also for larger $\lambda$ it is not difficult
to design constructions based on projective geometry~\cite{EKOO18},
on Ferrers diagrams~\cite{EGRW16,EtSi09,SiTr15}, and on rank-metric codes~\cite{SKK08}.

We also take into account
the generalized combination networks solved by some of the given codes. Since each $t$-subspace is contained in
$\sbinomq{n-t}{k-t}$ different $k$-subspaces of~$\F_q^n$ we have that
\begin{theorem}
\label{thm:max_lambda}
If $n$, $k$, and $t$ are positive integers such that $1 \leq t <k <n$, then
$$
\cA_q \left( n,k,t;\sbinomq{n-t}{k-t} \right) = \sbinomq{n}{k}~.
$$
\end{theorem}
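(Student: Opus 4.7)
The plan is to prove equality by establishing both inequalities, using the fact that the full Grassmannian $\cG_q(n,k)$ itself already satisfies the multiplicity constraint when $\lambda$ is maximal.

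First I would handle the upper bound $\cA_q(n,k,t;\sbinomq{n-t}{k-t}) \leq \sbinomq{n}{k}$, which is immediate: any $t$-$(n,k,\lambda)^m_q$ code without repeated codewords is by definition a subset of $\cG_q(n,k)$, and $|\cG_q(n,k)| = \sbinomq{n}{k}$.

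For the matching lower bound, I would exhibit the code $\C = \cG_q(n,k)$ itself. To verify that $\C$ is a $t$-$(n,k,\sbinomq{n-t}{k-t})^m_q$ code, fix an arbitrary $t$-subspace $T$ of $\F_q^n$. By Lemma~\ref{lem:t_k_complete}, $T$ is contained in exactly $\sbinomq{n-t}{k-t}$ distinct $k$-subspaces of $\F_q^n$, i.e., in exactly $\sbinomq{n-t}{k-t}$ codewords of $\C$. In particular, $T$ is contained in \emph{at most} $\sbinomq{n-t}{k-t}$ codewords of $\C$, so property (m.1) holds with $\lambda = \sbinomq{n-t}{k-t}$. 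Since the codewords of $\C$ are the distinct elements of $\cG_q(n,k)$, there are no repetitions, and $|\C| = \sbinomq{n}{k}$. This yields $\cA_q(n,k,t;\sbinomq{n-t}{k-t}) \geq \sbinomq{n}{k}$.

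There is essentially no obstacle here; the statement is a sanity check confirming that the parameter $\lambda = \sbinomq{n-t}{k-t}$ flagged in the remark following Lemma~\ref{lem:t_k_complete} is indeed the largest meaningful value of $\lambda$, beyond which the constraint becomes vacuous. The only subtle point is remembering that $\cA_q$ (as opposed to $\tilde{\cA}_q$) forbids repeated codewords, so the lower bound must be witnessed by a simple code, which the full Grassmannian trivially is.
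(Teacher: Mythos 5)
Your proof is correct and matches the paper's (essentially one-line) argument: the paper justifies the theorem by the same appeal to Lemma~\ref{lem:t_k_complete}, noting that every $t$-subspace lies in exactly $\sbinomq{n-t}{k-t}$ $k$-subspaces, so the full Grassmannian is itself an admissible simple code and nothing larger is possible. Your write-up just makes the two inequalities explicit; no difference in substance.
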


In view of the equality that $\cA_q (n,k,t;1) = \cA_q (n,n-k,n-2k+t;1)$ (see Theorem~\ref{thm:lambda=1})
and Theorem~\ref{thm:max_lambda}, it is interesting to have a hierarchy of the values
$\cA_q (n,k,t;\lambda)$, where $\lambda = 1,2,\ldots , \sbinomq{n-t}{k-t}$.
It should be noted that similarly to the claim in Theorem~\ref{thm:lambda=1} we have for $2k \leq n$,
\begin{equation}
\label{eq:comp_design}
\cA_q (n,k,t;\lambda) \leq \cA_q (n,n-k,n-2k+t;\lambda)
\end{equation}
by considering orthogonal complement codes. But, except for $\lambda=1$ such an
inequality in (\ref{eq:comp_design}) is unlikely to be equality. For example, we have for $2k \leq n$ that
$$
\cA_q \left(n,k,t;\sbinomq{n-t}{k-t}\right) = \sbinomq{n}{k}~,
$$
while for the dual dimension we have
$$
\cA_q \left( n,n-k,n-2k+t;\sbinomq{2k-t}{k-t} \right) = \sbinomq{n}{k}~.
$$
When $2k < n$ we have
$$
\sbinomq{2k-t}{k-t} < \sbinomq{n-t}{k-t}~,
$$
which implies that
$$
\cA_q \left(n,k,t;\sbinomq{2k-t}{k-t} \right)
$$
$$
< \cA_q \left(n,n-k,n-2k+t;\sbinomq{2k-t}{k-t}\right) = \sbinomq{n}{k}~.
$$

We have proved a duality between the two distance measures and their codes, but this is not
the only duality that can be obtained.
Let $\cC_q (n,k,t;\lambda)$ be the minimum size of a code (with distinct codewords) in $\cG_q(n,k)$
such that each $t$-subspace of $\cG_q(n,t)$ is contained in at least $\lambda$ codewords.
While $\cA_q (n,k,t;\lambda)$ represents the maximum size of a multiple packing,
the  quantity $\cC_q (n,k,t;\lambda)$ represents the minimum size of a multiple covering.
Clearly we have
\begin{theorem}
If $n$, $k$, and $t$ are positive integers such that $1 \leq t <k <n$ and $1 \leq \lambda \leq \sbinomq{n-t}{k-t}$, then
$$
\cA_q (n,k,t;\lambda) \leq \cC_q (n,k,t;\lambda) ~,
$$
with equality if and only if a subspace design $t$-$(n,k,\lambda)_q$ exists,
i.e. the bound of Theorem~\ref{thm:sphere} is attained with equality.
\end{theorem}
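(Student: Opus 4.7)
The plan is to use a single double-counting argument to produce a common real-valued bound for both quantities and then to identify exactly when this bound can be simultaneously attained from above (by a packing) and from below (by a covering).

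First I would count, for an arbitrary subset $\C \subseteq \cG_q(n,k)$, the pairs $(T,K)$ with $T \in \cG_q(n,t)$, $K \in \C$, and $T \subset K$. By Lemma~\ref{lem:t_k_complete} each codeword contributes $\sbinomq{k}{t}$ such pairs, so the total is $|\C|\,\sbinomq{k}{t}$. If $\C$ is a $t$-$(n,k,\lambda)^m_q$ code then the sum over $t$-subspaces of their codeword-multiplicities is at most $\lambda\,\sbinomq{n}{t}$, so
\[
|\C|\,\sbinomq{k}{t} \leq \lambda\,\sbinomq{n}{t}.
\]
If instead $\C$ covers every $t$-subspace at least $\lambda$ times, the reverse inequality holds. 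Taking $\C$ optimal in each case gives
\[
\cA_q(n,k,t;\lambda) \;\leq\; \lambda\,\frac{\sbinomq{n}{t}}{\sbinomq{k}{t}} \;\leq\; \cC_q(n,k,t;\lambda),
\]
which proves the inequality (and, as a by-product, reproves the bound of Theorem~\ref{thm:sphere}).

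For the ``if'' direction of the equality condition, a subspace design $t$-$(n,k,\lambda)_q$ is simultaneously a $t$-$(n,k,\lambda)^m_q$ packing and a $\lambda$-fold covering; the same double count shows its size equals $\lambda\,\sbinomq{n}{t}/\sbinomq{k}{t}$, so it meets both bounds and forces $\cA_q(n,k,t;\lambda) = \cC_q(n,k,t;\lambda)$. For the ``only if'' direction, suppose $\cA_q(n,k,t;\lambda) = \cC_q(n,k,t;\lambda) = M$. Then the chain above collapses, $M = \lambda\,\sbinomq{n}{t}/\sbinomq{k}{t}$ is necessarily an integer, and the bound of Theorem~\ref{thm:sphere} is attained. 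Take any code $\C$ achieving $\cA_q$; writing $m_T$ for the number of codewords containing a $t$-subspace $T$, we have $m_T \leq \lambda$ for all $T$ and $\sum_T m_T = |\C|\,\sbinomq{k}{t} = \lambda\,\sbinomq{n}{t}$. Since this sum has $\sbinomq{n}{t}$ terms each bounded by $\lambda$, every term must equal $\lambda$, and therefore $\C$ is a subspace design $t$-$(n,k,\lambda)_q$.

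I do not anticipate any serious obstacle: the entire argument is the same double count used in two directions, and the only subtle point is the averaging step in the ``only if'' direction, where bounded terms whose sum equals the maximum possible value are forced to be individually tight.
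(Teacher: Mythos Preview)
Your argument is correct. The paper does not actually give a proof of this theorem --- it simply prefaces the statement with ``Clearly we have'' --- so there is nothing to compare against; your double-counting (the same count as in Theorem~\ref{thm:sphere}, run in both directions, followed by the tightness argument forcing every $m_T=\lambda$) is precisely the standard way to make that ``clearly'' rigorous.
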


By Lemma~\ref{lem:t_k_complete}, each $t$-subspace of $\F_q^n$ is contained in $\sbinomq{n-t}{k-t}$ $k$-subspaces
of $\F_q^n$ and since the total number of $k$-subspaces of $\F_q^n$ is $\sbinomq{n}{k}$
we have a duality between a packing $\C$ of $k$-subspaces and the $k$-subspaces in
$\cG_q(n,k)$ which are not contained in $\C$. Hence, we have

\begin{theorem}
\label{thm:complement}
If $n$, $k$, and $t$ are positive integers such that $1 \leq t <k <n$ and $1 \leq \lambda \leq \sbinomq{n-t}{k-t}$, then
$$
\cA_q (n,k,t;\lambda) = \sbinomq{n}{k} - \cC_q \left(n,k,t;\sbinomq{n-t}{k-t} -\lambda \right) ~.
$$
\end{theorem}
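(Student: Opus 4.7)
The plan is to establish a size-preserving bijection between $t$-$(n,k,\lambda)^m_q$ codes (packings) and the covering codes counted by $\cC_q(n,k,t;\sbinomq{n-t}{k-t}-\lambda)$, via taking the complement inside $\cG_q(n,k)$. Let $N \deff \sbinomq{n-t}{k-t}$. By Lemma~\ref{lem:t_k_complete}, every $t$-subspace $T$ of $\F_q^n$ lies in exactly $N$ distinct $k$-subspaces of $\F_q^n$. This is the pigeonhole-type identity that drives the whole argument.

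First I would check the packing-to-covering direction. Let $\C \subseteq \cG_q(n,k)$ be a $t$-$(n,k,\lambda)^m_q$ code, and set $\overline{\C} \deff \cG_q(n,k)\setminus \C$. For any $t$-subspace $T$, the number of codewords of $\C$ containing $T$ is at most $\lambda$, so the number of members of $\overline{\C}$ containing $T$ is at least $N-\lambda$. Thus $\overline{\C}$ is a valid covering for the parameter $N-\lambda$, giving $|\overline{\C}| \geq \cC_q(n,k,t;N-\lambda)$, i.e.\ $|\C| \leq \sbinomq{n}{k} - \cC_q(n,k,t;N-\lambda)$. Taking the maximum over $\C$ yields the $\leq$ direction of the claimed identity.

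For the reverse direction I would start with a minimum-size covering $\cD$ meeting $\cC_q(n,k,t;N-\lambda)$: every $t$-subspace lies in at least $N-\lambda$ members of $\cD$. Then $\overline{\cD} \deff \cG_q(n,k)\setminus \cD$ has every $t$-subspace contained in at most $N-(N-\lambda)=\lambda$ members, so $\overline{\cD}$ is a $t$-$(n,k,\lambda)^m_q$ code of size $\sbinomq{n}{k}-\cC_q(n,k,t;N-\lambda)$. Therefore $\cA_q(n,k,t;\lambda) \geq \sbinomq{n}{k}-\cC_q(n,k,t;N-\lambda)$, and combined with the previous paragraph the equality follows.

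There is essentially no technical obstacle; the only thing to verify carefully is that the parameter ranges stay in the regime where both $\cA_q$ and $\cC_q$ are defined, which is guaranteed by the hypothesis $1 \leq \lambda \leq N$. The proof is therefore a short, clean complementation argument, and no counting of $k$-subspaces beyond Lemma~\ref{lem:t_k_complete} is required.
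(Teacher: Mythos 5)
Your proof is correct and follows exactly the complementation duality the paper invokes: it identifies a packing with the complement of a multiplicity-$\bigl(\sbinomq{n-t}{k-t}-\lambda\bigr)$ covering inside $\cG_q(n,k)$, using Lemma~\ref{lem:t_k_complete}. The paper only sketches this argument in one sentence; your write-up fills in both directions of the same idea, so there is nothing further to add.
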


\subsection{Complements}
\label{sec:complements}

By Corollary~\ref{cor:eq_perp2} the orthogonal complement code $\C^\perp$ of a $t$-$(n,k,\lambda)_q^m$ code $\C$ is
a $(\lambda+1)$-$(n,n-k,k-t+1)_q^c$ code. By Theorem~\ref{thm:complement}, the \emph{complement} code
of $\C$, $\C^c \deff \cG_q(n,k) \setminus \C$ is
a covering code with multiplicity $\sbinomq{n-t}{k-t} -\lambda$. If $\sbinomq{n-t}{k-t} -\lambda =1$ then $\C^c$ is a covering code,
i.e. a set of $k$-subspaces of~$\F_q^n$ such that each $t$-subspace of~$\F_q^n$ is contained
in at least one codeword of~$\C^c$.
It was proved in~\cite{EtVa11a} that $(\C^c)^\perp$ is a subspace $q$-Tur\'{a}n design, i.e. a set of $(n-k)$-subspaces of~$\F_q^n$
such that each $(n-t)$-subspace of $\F_q^n$ contains at least one codeword of~$(\C^c)^\perp$. This implies
various connections between all these types of codes. We would not go into detailed specific connections between codes
with given parameters. But, we will mention one specific code. The most interesting specific code which was heavily
studied is the $q$-analog of the Fano plane, i.e. the $q$-Fano plane~\cite{BEOVW,BKN15,EtHo18}. Such a code is a $2$-$(7,3,1)$ code
with $\sbinomq{7}{2} / \sbinomq{3}{2}$ codewords. It is not known whether such a code exists and there was
an extensive search and research for the largest code with these parameters. It was shown in~\cite{HKKW17} that for $q=2$ there exists
a code with such parameters and 333 codewords. The size of the smallest known related covering code is 396~\cite{Etz14}.
The size of the related $q$-Fano plane (if exists) is~381. These results coincide with the common belief that a construction of
a small covering code is simpler than a construction of related large packing code. In view of Theorem~\ref{thm:complement}
we have that $\cA_2 (7,3,2;30) \geq \sbinomtwo{7}{3} - 396$, while $\cC_2 (7,3,2;30) \leq \sbinomtwo{7}{3} - 333$.
This implies that the current known lower bound on the packing number is closer
to the packing bound (Theorem~\ref{thm:sphere}) than the related known upper bound on the
covering number. A~construction which might lead to a $q$-Fano plane for larger $q$ was recently given in~\cite{EtHo18}.
Related results on similar structures were proved recently in~\cite{BKKNW}.

\subsection{$\cA_q (n,k,t;\lambda)$ vs. $\cA_q (n,n-k,n-2k+t;\lambda)$}

The gap between the size of the largest code and the packing bound (Theorem~\ref{thm:sphere}) might be small
or large, depending on the parameters. Consider the case where
$n=6$, $k=4$, $t=3$, $\lambda =1$, and $q=2$. By Theorem~\ref{thm:sphere} we have that $\cA_2 (6,4,3;1) \leq 93$,
while the actual value is $\cA_2 (6,4,3;1) =21$. This is not unique for these parameters, and it occurs since
$\cA_q (n,k,t;1) =  \cA_q (n,n-k,n-2k+t;1)$, i.e. we should consider only $k \leq n/2$, when $\lambda =1$.
For $\lambda > 1$ there is no similar connection between codes of $\cG_q(n,k)$ and codes of $\cG_q(n,n-k)$.
Interesting phenomena might occur when $k > n/2$.
A good example can be given by considering $n=6$, $k=4$, $t=3$, and $\lambda=2$.
By Theorem~\ref{thm:sphere} we have that $\cA_2 (6,4,3;2) \leq 186$.
By Theorem~\ref{thm:Johnson1} we have that $\cA_2 (6,4,3;2) \leq \left\lfloor \frac{63}{15} \cA_2 (5,3,2;2) \right\rfloor$.
Hence, we have to consider the value of $\cA_2 (5,3,2;2)$. It is proved in~\cite{EKOO18} that
$\cA_2 (5,3,2;2) = 32$ and hence $\cA_2 (6,4,3;2) \leq 134$. This bound was improved
by a certain linear programming and a related construction was given,
so finally we have $121 \leq \cA_2 (6,4,3;2) \leq 126$~\cite{EKOO18}.
This is an indication that when $2k > n$ and $\lambda > 1$ the value of
$\cA_2 (n,k,t;\lambda >1)$ is relatively much larger than the one of
$\cA_2 (n,n-k,n-2k+t;\lambda > 1)$ (A comprehensive work on bounds and constructions
related to $\cA_q (n,k,t;\lambda)$ is discussed in~\cite{EKOO18}).

\subsection{Subspace Designs}

There is a sequence of subspace designs in which each $t$-subspace is contained in exactly $\lambda$ $k$-subspaces.
These designs form optimal generalized Grassmannian codes. These subspace designs
have small block length and very large $\lambda$. Hence, they are restricted  for solutions of generalized combination networks
for which the subspaces have dimension close to the one of the ambient space.
But, of course these subspace designs are optimal as Grassmannian codes. For example,
Thomas~\cite{Tho87} has constructed a $2$-$(n,3,7)_2$ design, where $n \geq 7$ and
$n \equiv 1$ or $5~(\text{mod}~6)$. This is the same as a $2$-$(n,3,7)^m_2$ code and its orthogonal
complement is an\linebreak $8$-$(n,n-3,2)^c_2$ code. By Theorem~\ref{thm:gen_RiAh},
such a code provides a solution for the
$(1,n-3)$-$\mathcal{N}_{n,(2^n-1)(2^{n-1}-1)/3,8(n-3)+1}$ network.
Some interesting codes and related networks exist for small parameters.
For example, consider the $2\text{-}(6,3,3)_2$ design presented in~\cite{BKL05},
which is a $2$-$(6,3,3)^m_2$ code and its orthogonal complement is a $4$-$(6,3,2)^c_2$ code.
By Theorem~\ref{thm:gen_RiAh}, such a code provides a solution for the $(1,3)$-$\mathcal{N}_{6,279,13}$ network.
Other designs which lead to optimal codes for related networks can be found in many recent
papers on this emerging topic, e.g.~\cite{BEOVW,BKL05,BKN15,BKKNW,Etz18,EtHo18,FLV14,KiLa15,KiPa15,MMY95,RaSi89,Suz90,Suz90a,Suz92,Tho87,Tho96}.

\subsection{Network Coding Solutions}

How do the results affect the gaps between vector network coding and scalar linear network coding?
A good example can be given by considering the ${(1,1)\text{-}\mathcal{N}_{3,r,4}}$ network
which was used in~\cite{EtWa15} to show that there is a network with three messages for which vector network coding
outperforms scalar linear network coding.

By Theorem~\ref{thm:gen_RiAh_Vec}, for a vector solution of this network a $3$-$(3 \ell,\ell,\ell)^c_q$ code is required.
For simplicity and for explaining the problems in obtaining lower and upper bounds on $\cA_q(n,k,t;\lambda)$ we consider
the case of $t=3$, $\ell=2$, and $q=2$, i.e. a $3$-$(6,2,2)^c_2$ code or its orthogonal complement,
a $3$-$(6,4,2)^m_2$ code. In~\cite{EtWa16} a code with 51 codewords was presented.
When a $3$-$(6,2,2)^c_2$ code was considered a code of size 121
(related to the bound $121 \leq \cA_2 (6,4,3;2) \leq 126$ mentioned before) was obtained~\cite{EKOO18}.

For a related scalar linear coding solution the related alphabet size is 4 (since
for the vector coding binary vectors of length 2 were considered). By Theorem~\ref{thm:gen_RiAh},
for a scalar linear network coding solution we need a $3\text{-}(3,1,1)_4^c$ code. The largest such code consists of
the 21 one-dimensional subspaces of $\F_4^3$, each one is contained twice
in the code. Therefore, the number of nodes in the middle layer
can be at most 42, while for vector network coding 121 nodes in the middle layer can be used (since $121 \leq \cA_2 (6,4,3;2)$).
The smallest alphabet size for which  a scalar solution with 121 nodes in the middle layer exists
is 8.  There are 73 one-dimensional subspaces of $\F_8^3$ and each one can be used twice in the
code, but only 121 codewords out of these 146 possible codewords are required.
This is the indication on the superiority of vector
network coding on scalar linear network coding.

\section{Hamming Codes vs. Grassmannian Codes}
\label{sec:approach_Hamm}

In this section we will examine codes in the Hamming scheme related to the
new distance measures. This will be done in two different directions.
In Section~\ref{sec:subfam_Hamm} sets of codes in the Hamming scheme will be considered as
subfamilies of the Grassmannian codes. In Section~\ref{sec:Hamm} we will consider
the related distance measures in the Hamming scheme. Finally, in Section~\ref{sec:gen_weight}
we show how the concepts of generalized Hamming weights can be used in the Grassmannian scheme
using the new distance measures.

\subsection{Hamming Codes as a Subfamily of Grassmannian Codes}
\label{sec:subfam_Hamm}

It should be noted that codes in the Hamming space with the Hamming distance form
a subfamily of the Grassmannian codes. This can be shown using a few different approaches.
The first one was pointed out in 1957 by Tits~\cite{Tits57} who suggested that combinatorics of sets could
be regarded as the limiting case $q \to 1$ of combinatorics of vector spaces
over the finite field~$\F_q$. This can be viewed as the traditional approach,
but it seems that this is not the approach that we need for our purpose.
The goal is to show that Hamming codes form a subfamily of the Grassmannian codes
based on network coding solutions for the generalized combination networks and the new distance measures.

The first approach that we suggest is based on the network coding solutions
for the combination networks.
By Theorem~\ref{thm:RiAh}, the $(0,1)$-$\mathcal{N}_{h,r,\alpha}$ network has a scalar solution
over~$\F_q$, if and only if there exists a code $\cC$, over~$\F_q$, of length $r$, $q^h$ codewords, and minimum Hamming
distance $r-\alpha+1$. This code can be either linear or nonlinear. If the code is nonlinear then some of the
coding functions on the $r$ edges between the first layer and the second layer in the
$(0,1)$-$\mathcal{N}_{h,r,\alpha}$ network are nonlinear and our framework using
$\alpha$-$(h,k,\delta)^c_q$ codes is not the appropriate one. Such nonlinear codes
for nonlinear network coding will be considered in the next paragraph.
If the code $\cC$ is linear then our framework is indeed a $q$-analog for the solution with $\cC$.
For such a code $\cC$ we have an $h \times r$ generator matrix~$G$ for which any set of $\alpha$ columns
from~$G$ has a subset of $h$ linearly independent columns. The $r$ columns, considered
as one-subspaces of $G$, form an $\alpha$-$(h,1,h-1)^c_q$ code
as implied by Theorem~\ref{thm:gen_RiAh}. The number of codewords in the largest such a code is
the same as the largest length~$r$ of a code $\cC$ with dimension~$h$ and minimum Hamming distance $r-\alpha+1$.
We note that if $h$ and $\alpha$ are fixed then the minimum Hamming distance depends on this largest length $r$.
Hence, it is more natural to search for the largest $\alpha$-$(h,1,h-1)^c_q$ code for which the minimum
Hamming distance is not a parameter, when the $r$ codewords are taken as the columns of an $h \times r$ generator matrix.

In the second approach, a solution for the $(\epsilon,k)$-$\mathcal{N}_{h,r,\alpha k+\epsilon}$ network
based on a code in the Hamming scheme is considered.
In other words, let us consider a scalar solution for the network based on a code over $\F_q$ in the Hamming scheme rather than
a solution based on a code in the Grassmannian $\cG_q(h,k)$ as was done in the
previous sections. The code can be linear or nonlinear and we distinguish between
these two cases. There will be another distinction in the formulation of the network
coding solutions. For a code in the Grassmannian scheme,
the coefficients (which form the local coding vectors) of the linear functions on the edges of the network are the entries of the vectors
related to a basis of the subspaces from the Grassmannian code. If the $k$ columns of each such basis were written in an
$h \times (rk)$ matrix $G$ (each node in the middle layer yields $k$ consecutive columns for this matrix),
then~$G$ would have been the generator matrix of a code in the Hamming scheme which
form a solution for the network. In fact an optimal solution (maximum number of nodes in the middle layer)
with a code in the Grassmannian scheme
will induce an optimal linear solution (largest length) with a code in the Hamming scheme and vice versa.
But, we are interested in analysing the solution for the network using
a code in the Hamming scheme as this might give us extra information and maybe codes with smaller
alphabet size (clearly only in the case of nonlinear codes).
Such a code (nonlinear over $\F_q$) must have length $rk$ as for the linear code. The source transmit $k$ consecutive
symbols of a codeword to each different
node from the middle layer, where the $i$-th node will receive the $i$-th set
of $k$ consecutive symbols of the codeword. The code should have $q^h$
codewords since the source must send a codeword to the nodes in the middle layer for each one of the different
$q^h$ messages. This is the point to make it clear that some codewords sent by the source might be identical. This would not cause
any problem since identical codewords can be distinguished at the receivers by the information sent from
the source to the receiver through the $\epsilon$ direct links. For simplicity the reader can assume first
that $k=1$, although the description will be for any
$k \geq 1$. Each sub-codeword of length~$\alpha k$ in a projection of $\alpha k$ coordinates,
related to $\alpha$ nodes in the middle layer, appears at most $q^\epsilon$
times as such a sub-codeword in the projection of these $\alpha k$ coordinates.
In this way, the source can send $\epsilon$ symbols, to the related
receiver, that will distinguish between the codewords which have the same values in these $\alpha k$
coordinates. Such codes which have larger values of $r$ than the linear ones exist.
For example, the vector network codes
which outperform the scalar linear network codes can be translated to scalar nonlinear
network codes. But, to find in this way nonlinear codes with
larger length than the ones obtained by the vector network codes is an interesting and intriguing question
for future research. When we consider a linear code the last requirement is slightly stronger (and weaker
in the sense of obtaining larger codes).
Each subset of $\alpha$ nodes from the middle layer have in the projection
of their $\alpha k$ coordinates, in the codeword, at least $h-\epsilon$ linearly independent vectors. This corresponds
exactly to the fact that each subset of $\alpha$ codewords in the Grassmannian code spans a subspace of $\F_q^h$ whose
dimension is at least $h- \epsilon$. This analysis simulates exactly the requirements from a scalar linear solution
for the $(\epsilon,k)$-$\mathcal{N}_{h,r,\alpha k+\epsilon}$ network.
To end this analysis we have to explain what are the nonlinear functions written on the edges.
For each $h$ messages and any given $k$ links from the source to a node $v$ in the middle layer,
the source should send a symbol from $\F_q$. This clearly define a function with $h$ variables in $\F_q$.
Each one of the $q^h$ different combinations of the $h$ messages implies $k$ symbols
to the $k$ edges from the source to the $i$th nodes in the middle layer.
This implies some functions of the $h$ messages. These functions are written on the $k$ links which
connect the source to the $i$th node of the middle layer.

Linear codes in the Hamming scheme can be used also in other ways as solutions for the
generalized combination network. Let $\cC$ be a linear code of length $r$, dimension $k$,
and minimum Hamming distance $d$, over $\F_q$. For such a code, there exists an $(r-k) \times r$ parity-check matrix~$H$.
In~$H$, each $d-1$ columns are linearly independent. Hence, the $r$ columns of~$H$ form a $(d-1)$-$(r-k,1,d-2)^c_q$ code.
By Theorem~\ref{thm:gen_RiAh}, this code solves the $(r-k-d+1,1)$-$\mathcal{N}_{r-k,r,r-k}$
network. The number of codewords in the largest such a Grassmannian code
is the same as the largest length~$r$ of a code of dimension $k$, and minimum Hamming distance $d$. Also, in this
case we note that in general we would like to obtain the largest $r$ when $r-k$ and $d$ are fixed.
Hence, it is more natural to search for the largest $(d-1)$-$(h,1,d-2)^c_q$ code for which $d$ and $h$ are fixed.
There are related results from arcs in projective planes. Since these
results are limited for a space of dimension three we omit these results. The interested
reader is referred to~\cite{HiSt01}.

The conclusion from this discussion is that the codes in the Hamming scheme can be viewed as a
subfamily of Grassmmannian codes. But, it is still interesting to find nonlinear codes
which outperform vector codes as solutions for the generalized combination networks.

\subsection{Related Codes in the Hamming Scheme}
\label{sec:Hamm}

We have defined two new distance measures and their related codes in the Grassmann space.
What are the related codes in the Hamming scheme (or more precisely in the Johnson scheme)?
This family of codes was defined before, but usually was considered for limited number of parameters.
A $t$-$(n,k,\lambda)$ packing for $1 \leq t \leq k \leq n$, is a collection of $k$-subsets (called \emph{blocks})
from an $n$-set $Q$, such that each $t$-subset of $Q$ is contained in at most
$\lambda$ blocks. These packings were considered for small $t$ and $k$ and mainly
optimal packings were considered. For the known results on
this topic the reader is referred to~\cite{MiMu92,SWY06} and references therein.
Generally, each such packing can be viewed as a code
of length $n$. If the $n$-set is $\{ 1,2,\ldots,n\}$ then a $k$-subset $X$ is
translated into a codeword of length $n$ and weight $k$, where the \emph{ones}
are in the coordinates indexed by the $k$-subset $X$.

Let $A(n,k,t;\lambda)$ be the maximum number of $k$-subsets in a $t$-$(n,k,\lambda)$ packing.
For these new families of codes only codes with no repeated codewords are considered.
When $k > n/2$ the bounds on the sizes of such codes are related to the Tur\'{a}n's problem~\cite{Kee11},
in a similar way to the complements in Section~\ref{sec:complements}.

For example we consider the value of $A(n,n-2,n-3;\lambda)$. An $(n-3)$-subset of~$[n]$
is contained in exactly three $(n-2)$-subsets of $[n]$. Hence, $\lambda$ can have only the values
$\lambda=1$, $\lambda=2$, or $\lambda=3$. If $\lambda =1$ then the value is just the packing number
$A(n,n-2,n-3;1)=A(n,2,1;1)= \left\lfloor \frac{n}{2} \right\rfloor$. If $\lambda=3$ then
there exists a trivial code which contains all the $(n-2)$-subsets of $[n]$. This code
attains the value $A(n,n-2,n-3;3)= \binom{n}{2}$. The only nontrivial value to consider
is $A(n,n-2,n-3;2)$ which is a well-known Tur\'{a}n number. Hence, we have
for $n \geq 2$, $A(2n,2n-2,2n-3;2)=n^2$ and for $n \geq 2$, $A(2n+1,2n-1,2n-2;2)=n(n+1)$.
Note, that by the packing bound we have $A(n,n-2,n-3;1) \leq \frac{n(n-1)}{6}$,
$A(n,n-2,n-3;2) \leq \frac{n(n-1)}{3}$, and $A(n,n-2,n-3;3) \leq \binom{n}{2}$.
While $A(n,n-2,n-3;3)$ attains the packing bound, the
value of $A(n,n-2,n-3;1)$ is far from the bound. As for  $A(n,n-2,n-3;2)$ we have an asymptotic ratio
of $3/4$ between the exact values and the upper bound of the packing bound.
The behavior, of these ratios, when there
are many more possible values for $\lambda$ is an interesting question.

\subsection{Generalized Weights}
\label{sec:gen_weight}

The definition of $\alpha$-Hamming covering codes (which is the related
definition for \linebreak$\alpha$-{Grassmannian covering codes) for constant weight codes can be straightforward generalized
to any code in the Hamming scheme, where the $\alpha$-Hamming covering of $\alpha$ nonzero codewords in a code $\cC$
is the number of coordinates which are nonzero in some of the $\alpha$ nonzero codewords. The minimum
$\alpha$-Hamming covering of a code $\cC$ is the minimum $\alpha$-Hamming covering on any set of $\alpha$ distinct nonzero codewords of $\cC$.
We can form a hierarchy of the $\alpha$-Hamming covering for $\cC$, from $\alpha=1$, $\alpha=2$, up to $\alpha=|\cC|$.
This hierarchy will be denoted by $c_1,c_2,\ldots,c_{|\cC|}$.
This hierarchy, for the Grassmannian codes, can be viewed as
a $q$-analog for the generalized Hamming weights~\cite{Wei91}.
Previously a geometric approach for these generalized weights was given in~\cite{TsVl95}.
Let~$\cC$ be a linear code of length $n$ and dimension $k$ and $\cA$ be a linear subcode of~$\cC$.
The \emph{support} of $\cA$, denoted  by $\chi(\cA)$, is defined by
$$
\chi(\cA)\deff\{i:\exists(a_1,a_2,\ldots,a_n)=\textbf{a}\in \cA,~a_i\neq 0\}.
$$

The $r$th \emph{generalized Hamming weight} of a linear code~$\cC$, denoted by  $d_r(\cC)$ ($d_r$ in short), is
the minimum support of any $r$-dimensional subcode of~$\cC$, $1\leq r\leq k$, namely,
$$
d_r=d_r(\cC)\deff \min_{\cA}\{|\chi(\cA)|:
$$
$$
\cA \text{~is~a~linear~subcode~of}~\cC, \dim (\cA) = r\}.
$$
Clearly, $d_r\leq d_{r+1}$ for $1\leq r\leq k-1$.
The set $\{d_1,d_2,\ldots,d_k\}$ is called the \emph{generalized Hamming
weight hierarchy} of~$\cC$~\cite{Wei91}.

It is not difficult to verify that for a binary linear code $\cC$ we have,
$c_1 = d_1$, $c_2=c_3=d_2$, $c_4=c_5=c_6=c_7=d_3$, and in general
$c_{2^i} = c_{2^i+1} = \cdots = c_{2^{i+1}-1}=d_{i+1}$.
This direction of research is interesting, especially when we consider the $q$-analog
of the generalized Hamming weights in
the Grassmann scheme and a straightforward generalization to subspace codes which are not
necessarily constant dimension. This possible $q$-analog for the generalized Hamming weights
can be of great interest and it is a topic for future research.



\section{Conclusions and Open Problems}
\label{sec:conclude}

We have introduced a new family of Grassmannian codes
with two new distance measures which generalize the traditional
Grassmannian codes with the Grassmannian distance. There is a correspondence
between the set of these codes of maximum size and the optimal scalar linear solution
for the set of generalized combination networks.
Based on the generalized combination networks and the new distance measures we have
proved that codes in the Hamming scheme can be viewed as a subfamily of the Grassmannian codes
from a few different angles.
The research we have started for bounds on the sizes of such codes is very preliminary
and there are many obvious coding questions related to these codes, some of which are currently under research~\cite{EKOO18}
and will provide lots of ground for future research. Other interesting problems were suggested throughout our exposition
and are summarized as follows.
\begin{enumerate}
\item Find scalar nonlinear network codes which outperform vector network codes
on generalized combination networks. What is the maximum gap in the alphabet size
and/or the number of nodes in the middle layer of the generalized combination networks related to these codes?

\item Consider the hierarchy for the $q$-analog of the generalized Hamming weights.

\item Find more applications for the new classes of codes in network coding.
\end{enumerate}


\section*{Acknowledgement}

The authors thank Jerry Anderson Pinheiro and Yiwei Zhang for pointing~\cite{Kee11}
and the connections with Tur\'{a}n's numbers. They also thank the associate editor and the reviewers for their
constructive comments.


\begin{IEEEbiographynophoto}{Tuvi Etzion}
(M'89--SM'94--F'04) was born in Tel Aviv, Israel, in 1956.  He
received the B.A., M.Sc., and D.Sc.~degrees from the Technion --
Israel Institute of Technology, Haifa, Israel, in 1980, 1982, and
1984, respectively.  From 1984 he held a position in the Department
of Computer Science at the Technion, where he now holds the Bernard
Elkin Chair in Computer Science. During the years 1985--1987 he was
Visiting Research Professor with the Department of Electrical
Engineering -- Systems at the University of Southern California, Los
Angeles. During the summers of 1990 and 1991 he was visiting
Bellcore in Morristown, New Jersey. During the years 1994--1996 he
was a Visiting Research Fellow in the Computer Science Department at
Royal Holloway University of London, Egham, England. He also had
several visits to the Coordinated Science Laboratory at University
of Illinois in Urbana-Champaign during the years 1995--1998, two
visits to HP Bristol during the summers of 1996, 2000, a few visits
to the Department of Electrical Engineering, University of
California at San Diego during the years 2000--2017, several
visits to the Mathematics Department at Royal Holloway University of
London, during the years 2007--2017, and a few visits to the
School of Physical and Mathematical Science (SPMS), Nanyang Technological University, Singapore,
during the years 2016--2019.

His research interests include applications of discrete mathematics
to problems in computer science and information theory, coding
theory, network coding, and combinatorial designs.

Dr.~Etzion was an Associate Editor for Coding Theory for the \textsc{IEEE
Transactions on Information Theory} from 2006 till 2009. From 2004 to
2009, he was an Editor for the \emph{Journal of Combinatorial Designs}.
From 2011 he is an Editor for Designs, Codes, and Cryptography, and
from 2013 an Editor for Advances of Mathematics in Communications.
\end{IEEEbiographynophoto}

\begin{IEEEbiographynophoto}{Hui Zhang}
received the Ph.D. degree in Applied Mathematics from Zhejiang
University, Hangzhou, Zhejiang, P. R. China in 2013. During 2012--2015, she
used to work as a Project Officer and a Research Fellow at the School of
Physical and Mathematical Sciences at Nanyang Technological University,
Singapore. During 2015--2017, she was a Postdoctoral Researcher at the
Computer Science Department at Technion - Israel Institute of Technology.
Currently, she is a Research Fellow at the School of Physical and
Mathematical Sciences at Nanyang Technological University, Singapore. Her
research interests include combinatorial theory, coding theory and
cryptography, and their intersections.
\end{IEEEbiographynophoto}

\end{document}